\newcommand{\citeapp}[1]{\cite{myproofs}}
\newcommand{\Nset}{\mathbb{N}}
\newcommand{\Nseto}{\Nset_0}
\newcommand{\Rset}{\mathbb{R}}
\newcommand{\Rsetpo}{\mathbb{R}_{\ge 0}}
\newcommand{\powerset}{\mathcal P}
\newcommand{\nxt}{\mathrm{next}}
\newcommand{\cannot}{\bullet}
\newcommand{\dist}{\mathcal{D}}
\newcommand{\measures}{\mathcal{D}}
\renewcommand{\vec}[1]{\mathbf{#1}}
\newcommand{\de}[1]{\mathit{d#1}}  %
\newcommand{\ex}{\mathrm{Exp}}
\newcommand{\B}{E}
\newcommand{\choices}{\mathcal W}
\newcommand{\corners}{\mathcal E}
\newcommand{\bisim}{\sim}
\newcommand{\hennessy}{\mathit{Hen}}
\newcommand{\sigmafield}{\Sigma}
\newcommand{\states}{\mathit{S}}
\newcommand{\sta}{s}
\newcommand{\Sta}{X}
\newcommand{\statesfield}{\sigmafield(\states)}
\newcommand{\labels}{\mathit{L}}
\newcommand{\lab}{a}
\newcommand{\Lab}{A}
\newcommand{\labelsfield}{\sigmafield(\labels)}
\newcommand{\resolver}{\rho}
\newcommand{\fail}{\square}
\newcommand{\dis}{\mu}
\newcommand{\distwo}{\nu}
\newcommand{\Dis}{M}
\newcommand{\mea}{\mu}
\newcommand{\meatwo}{\nu}
\newcommand{\Mea}{M}
\newcommand{\statemeasuresfield}{\sigmafield(\measures(\states))}
\newcommand{\functorfin}{\heartsuit}
\newcommand{\functorinf}{\spadesuit}
\newcommand{\act}{L}
\newcommand{\tran}[1]{{}\mathchoice%
    {\xrightarrow{#1}}
    {\smash{\xrightarrow{#1}}}
    {\xrightarrow{#1}}
    {\xrightarrow{#1}}
{}}
\newcommand{\arrowsub}[1]{\hspace{-2pt}_{#1}}
\newcommand{\suc}{\mathrm{succ}}
\newcommand{\prob}{\mathbf{P}}
\newcommand{\probm}{\mathcal{P}}
\newcommand{\kernel}{\tau}
\newcommand{\PA}{D}
\newcommand{\patrans}{\longrightarrow}
\newcommand{\patransa}[1]{{}\mathchoice%
    {\stackrel{#1}{\longrightarrow}}
    {\mathop {\smash\longrightarrow}\limits^{\vrule width 0pt height 0pt depth 4pt\smash{#1}}}
    {\stackrel{#1}{\longrightarrow}}
    {\stackrel{#1}{\longrightarrow}}
{}}
\newcommand{\pts}{\mathbf{P}}
\newcommand{\SA}{\mathcal{S}}
\newcommand{\locations}{\mathcal{Q}}
\newcommand{\clocks}{\mathcal{C}}
\newcommand{\actions}{\mathcal{A}}
\newcommand{\edges}{\rightarrow}
\newcommand{\clocksetting}{\kappa}
\newcommand{\clockdist}{F}
\newcommand{\loc}{q}
\newcommand{\clo}{c}
\newcommand{\Clo}{X}
\newcommand{\val}{\xi}
\newcommand{\loctop}{q}
\newcommand{\locleft}{u}
\newcommand{\locright}{v}
\newcommand{\locbottom}{r}
\newcommand{\chain}{\bar{\PA}}
\newcommand{\chainstates}{\bar{S}}
\newcommand{\SAstates}{S}
\newcommand{\abssemantics}{\hat{\pts}}
\newcommand{\absstates}{\hat{\states}}
\newcommand{\todo}[1]{\todoc{#1}{red}}
\newcommand{\todoc}[2]{%
  %
  \begin{tikzpicture}[remember picture]%
      \node [coordinate] (inText) {};%
  \end{tikzpicture}%
  %
  \marginpar{%
      \begin{tikzpicture}[remember picture]%
          \draw node[draw=#2, color=#2, text width = 4.2cm, inner sep=2pt] (inNote)%
                   {{\footnotesize #1}};%
      \end{tikzpicture}%
  }%
  \begin{tikzpicture}[remember picture, overlay]%
      \draw[draw = #2, thick]%
          ([yshift=-0.1cm] inText)%
              -| ([xshift=-0.1cm] inNote.west)%
              -| (inNote.west);%
  \end{tikzpicture}%
}%
\renewcommand{\todoc}[2]{}
\newcommand{\holger}[1]{\todoc{Holger: #1}{WildStrawberry}}
\newcommand{\jena}[1]{\todoc{Jeňa: #1}{OliveGreen}}
\newcommand{\out}[1]{} 
\newcommand{\tableau}[1]{\textbf{#1}}
\newcommand{\theoremlike}[2]{\par\medskip\penalty-250\refstepcounter{theorem}{{\bfseries\noindent#2
\ref{#1}.}}}
\newcommand{\thmhelperpre}[2]{\theoremlike{#1}{#2}}
\newcommand{\thmhelperpost}{\par\medskip}
\newenvironment{reftheorem}[1]{\thmhelperpre{#1}{Theorem}}{\thmhelperpost}
\newenvironment{reflemma}[1]{\thmhelperpre{#1}{Lemma}}{\thmhelperpost}
\newenvironment{refproposition}[1]{\thmhelperpre{#1}{Proposition}}{
\thmhelperpost }
\newif\iffull
\newcommand{\refappendix}{the appendix}
\newcommand{\QED}{\qed}
\newcommand{\myspace}{\vspace*{-1em}}
\title{Probabilistic Bisimulation: \\ 
Naturally on Distributions}
\author{ 
Holger Hermanns\inst{1} \and Jan Kr\v{c}\'{a}l\inst{1} \and Jan K\v{r}et\'{i}nsk\'{y}\inst{2}
}
\institute{
Saarland University -- Computer Science, Saarbr\"ucken, Germany
    \texttt{\{hermanns,krcal\}@cs.uni-saarland.de}
\and 
IST Austria
	\texttt{jan.kretinsky@ist.ac.at}
}
\begin{document}

\pagestyle{plain}

\maketitle

\setcounter{footnote}{0}
\renewcommand{\thefootnote}{(\arabic{footnote})}

\begin{abstract}
%
  In contrast to the usual understanding of probabilistic systems as
  stochastic processes, recently these systems have also been regarded
  as transformers of probabilities. In this paper, we give a
  \emph{natural} definition of strong bisimulation for probabilistic
  systems corresponding to this view that treats probability
  \emph{distributions} as first-class citizens. Our definition applies
  in the same way to discrete systems as well as to systems with
  uncountable state and action spaces.  Several examples demonstrate
  that our definition refines the understanding of behavioural
  equivalences of probabilistic systems. In particular, it solves a
  long-standing open problem concerning the representation of
  memoryless continuous time by memory-full continuous time. Finally,
  we give algorithms for computing this bisimulation not only for
  finite but also for classes of uncountably infinite systems.
\end{abstract}

\section{Introduction} 
\label{sec:intro}

Continuous time concurrency phenomena can be addressed in two principal
manners: On the one hand, \emph{timed automata} (TA) extend
interleaving concurrency with real-valued clocks~\cite{alur-dill}.
On the other hand, time can be represented by memoryless stochastic
time, as in \emph{continuous time Markov chains} (CTMC) and extensions,
where time is represented in the form of exponentially distributed
random delays~\cite{pepa,HHM,ymca,DBLP:conf/lics/EisentrautHZ10}. TA and CTMC variations
have both been applied to very many intriguing cases, and are
supported by powerful real-time, respectively stochastic time model
checkers~\cite{uppaal,prism} with growing user bases.  The models are
incomparable in expressiveness, but if one extends timed automata with
the possibility to sample from exponential distributions~\cite{tutti,DBLP:journals/iandc/DArgenioK05,strulo}, there
appears to be a natural bridge from CTMC to TA. This kind of
stochastic semantics of timed automata has recently gained
considerable popularity by the statistical model checking approach to
TA analysis~\cite{statistical1,statistical2}.

Still there is a disturbing difference, and this difference is the
original motivation~\cite{pedro-christel} of the work presented in
this paper. The obvious translation of an exponentially distributed
delay into a clock expiration sampled from the very same exponential
probability distribution fails in the presence of concurrency. This is
because the translation is not fully compatible with the natural
interleaving concurrency semantics for TA respectively CTMC. This is
illustrated by the following example, which in the middle displays two
small CTMC, which are supposed to run independently and
concurrently.\jena{mention that 1 and 2 are rates?}\holger{Lengthy to
  explain it here to someone who does not know it already, and those
  who know do not need the explanation.}

\begin{center}
\myspace
\begin{tikzpicture}[outer sep=0.1em,->,
state/.style={draw,circle,minimum size=1.7em, inner sep=0.1em},
sstate/.style={draw,circle,minimum size=0.7em, inner sep=0.1em},
label/.style={font=\small}]

\begin{scope}[xshift=-2.6cm]
\node (s) at (0,2) [state] {$\loctop$};
\node (t) at (-1,1) [state] {$\locleft$};
\node (u) at (1,1) [state] {$\locright$};
\node (v) at (0,0) [state] {$\locbottom$};
\node (ss) at (1,2.1) {$\substack{x:=\ex(1),\\y:=\ex(2)}$};

\path[->]
(s) edge node[label,right]{$a$} node[label,left]{$x=0$} (t)
(s) edge node[label,left]{$b$} node[label,right]{$y=0$} (u)
(t) edge node[label,right]{$b$} node[label,left]{$y=0$} (v)
(u) edge node[label,left]{$a$} node[label,right]{$x=0$} (v)
;
\end{scope}

\begin{scope}[xshift=4.7cm]
\node (h) at (-4,1.5) [sstate] {$$};
\node (i) at (-4,0.5) [sstate] {$$};

\node (j) at (-3,1.5) [sstate] {$$};
\node (k) at (-3,0.5) [sstate] {$$};

\path[->]
(h) edge node[label,right]{1}  (i)
(j) edge node[label,right]{2}  (k)
;
\end{scope}

\begin{scope}[xshift=5.6cm]
\node (s) at (0,2) [state,inner sep=0em] {$\loctop'$};
\node (t) at (-1,1) [state,inner sep=0em] {$\locleft'$};
\node (u) at (1,1) [state,inner sep=0em] {$\locright'$};
\node (v) at (0,0) [state,inner sep=0em] {$\locbottom'$};
\node (ss) at (1,2.1) {$\substack{x:=\ex(1),\\y:=\ex(2)}$};
\node (tt) at (-2,1) {$\substack{\\y:=\ex(2)}$};
\node (uu) at (2,1) {$\substack{\\x:=\ex(1)}$};

\path[->]
(s) edge node[label,right]{$a$} node[label,left]{$x=0$} (t)
(s) edge node[label,left]{$b$} node[label,right]{$y=0$} (u)
(t) edge node[label,right]{$b$} node[label,left]{$y=0$} (v)
(u) edge node[label,left]{$a$} node[label,right]{$x=0$} (v)
;
\end{scope}
\end{tikzpicture}
\myspace
\end{center}
On the left and right we see two stochastic automata (a variation of
timed automata formally defined in Section~\ref{sec:applications!}). They have
clocks $x$ and $y$ which are initialized by sampling from exponential
distributions, and then each run down to $0$. The first one reaching
$0$ triggers a transition and the other clock keeps on running unless
resampled, which happens on the right, but not on the left. The left
model is obtained by first translating the respective CTMC, and then
applying the natural TA interleaving semantics, while the right model
is obtained by first applying the equally natural CTMC interleaving
semantics prior to translation.

The two models have subtly different semantics in terms of their
underlying dense probabilistic timed transition systems. This can
superficially be linked to the memoryless property of exponential
distributions, yet there is no formal basis for proving equivalence.
This paper closes this gap, which has been open for at least 15 years,
by introducing a natural \emph{continuous-space distribution-based}
bisimulation. This result is embedded in several further intriguing
application contexts and algorithmic achievements for this novel
bisimulation\holger{rephrased}.

The theory of bisimulations is a well-established and elegant
framework to describe equivalence between processes based on their
behaviour. In the standard semantics of probabilistic
systems~\cite{DBLP:conf/popl/LarsenS89,SegalaL94}, when a
probabilistic step from a state to a distribution is taken, the random
choice is resolved and we instead continue from one of the successor states.
Recently, there has been considerable interest in instead regarding
probabilistic systems as deterministic transformers of probability
\emph{distributions}~\cite{DBLP:conf/qest/KorthikantiVAK10,DBLP:conf/lics/AgrawalAGT12,DBLP:journals/corr/0001MS13a}, where the choice is not resolved and we continue from the
distribution over successors.  Thus, instead of the current state the
transition changes the current distribution over the states. Although
the distribution semantics is very natural in many contexts
\cite{DBLP:dblp_journals/fac/Hennessy12}, it has been only partially
reflected in the study of bisimulations
\cite{DBLP:dblp_journals/fac/Hennessy12,DBLP:journals/ijfcs/DoyenHR08,DBLP:journals/corr/FengZ13,DBLP:conf/lics/EisentrautHZ10}.

Our definition arises as an unusual, but very simple instantiation of
the standard coalgebraic framework for bisimulations
\cite{Sangiorgi:2011:ATB:2103601}. (No knowledge of coalgebra is
required from the reader though.) Despite its simplicity, the
resulting notion is surprisingly fruitful, not only because it indeed
solves the longstanding correspondence problem between CTMC and TA with
stochastic semantics.

Firstly, it is more adequate than other equivalences when applied to
systems with distribution semantics, including large-population models
where different parts of the population act differently
\cite{may1974biological}. Indeed, as argued in
\cite{DBLP:journals/fac/GeorgievskaA12}, some equivalent states are
not identified in the standard probabilistic bisimulations and too
many are identified in the recent distribution based bisimulations
\cite{DBLP:journals/ijfcs/DoyenHR08,DBLP:journals/corr/FengZ13}. Our
approach allows for a bisimulation identifying precisely the desired
states~\cite{DBLP:journals/fac/GeorgievskaA12}.

Secondly, 
our bisimulation over distributions induces an equivalence \emph{on
  states}, and  this relation equates behaviourally indistinguishable
states which in many settings are unnecessarily distinguished by standard bisimulations. We shall discuss this phenomenon in the context of several
applications. \holger{What we are saying here is that the new relation is
  coarser, and that this is good, right?} \holger{And I do not get the
difference between 1 and 2.}
Nevertheless, the key idea to work with distributions instead of
single states also bears disadvantages. The main difficulty is that
even for finite systems the space of distributions is uncountable,
thus bisimulation is difficult to compute. However, we show that it
admits a concise representation using methods of linear algebra and we
provide an  algorithm for computing it.  Further, in order to cover e.g.\
continuous-time systems, we need to handle both uncountably many
states (that store the sampled time) and labels (real time
durations). Fortunately, there is an elegant way to do so using the
standard coalgebra framework\holger{This is pretty much hidden now. Deephasize}. Moreover, it can easily be further
generalized, e.g.\ adding rewards to the generic definition is a
trivial task.

\noindent \textbf{Our contribution} is the following:
\begin{itemize}
 \item We give a natural definition of bisimulation from the distribution perspective for systems with generally uncountable spaces of states and labels.
 \item We argue by means of several applications that the definition
   can be considered more useful than the classical notions of probabilistic bisimulation.
 \item We provide an algorithm to compute this distributional
   bisimulation on finite non-deterministic probabilistic systems, and
   present a decision algorithm for uncountable continuous-time
   systems induced by the stochastic automata mentioned above.
\end{itemize}
\iffull Full proofs can be found in the appendix. \else A full version of this paper is available~\refappendix. \fi 


\section{Probabilistic bisimulation on distributions}
\label{sec:bisim}

A (potentially uncountable) set $S$ is a \emph{measurable space} if it is equipped with a $\sigma$-algebra, which we denote by $\sigmafield(X)$. The elements of $\sigmafield(X)$ are called \emph{measurable sets}.
For a measurable space $S$, let $\measures(S)$ denote the set of \emph{probability measures} (or \emph{probability distributions}) over $S$. The following definition is similar to the treatment of~\cite{phd-cont}.
\todo{``The following definition is similar to the treatment of~\cite{phd-cont}.'' to related work?}
\jena{I'd prefer to keep it here. We did not come up with this Def.}

\begin{definition}
 A \emph{non-deterministic labelled Markov process} (NLMP) is a tuple $\pts = (\states, \labels, \{\kernel_\lab \mid \lab\in\labels\})$ where $\states$ is a measurable space of \emph{states}, $\labels$ is a measurable space of \emph{labels}, and 
 $\kernel_\lab : \states \to \statemeasuresfield$ assigns to each state $\sta$ a measurable set of probability measures $\kernel_\lab(\sta)$ available in $\sta$ under $\lab$.%
 \footnote{We further require that for each $\sta \in \states$ we have $\{(\lab,\mea) | \mea \in \kernel_\lab(\sta)\} \in \labelsfield \otimes \statemeasuresfield$ and for each $\Lab \in \labelsfield$ and $Y \in \statemeasuresfield$ we have $\{\sta\in\states \mid \exists \lab \in \Lab. \kernel_\lab(\sta) \cap Y \neq \emptyset \}\in \statesfield$. Here $\statemeasuresfield$ is the Giry $\sigma$-algebra~\cite{giry1982categorical} over $\measures(X)$.}
\end{definition}

When in a state $\sta\in\states$, NLMP reads a label $\lab \in \labels$ and \emph{non-deterministically} chooses a successor distribution $\dis\in\measures(\states)$ that is in the set of convex combinations\footnote{
A distribution $\mea \in\measures(\states)$ is a \emph{convex combination} of a set $\Mea \in \statemeasuresfield$ of distributions if there is a measure $\meatwo$ on $\measures(\states)$ such that $\meatwo(\Mea) = 1$ and $\mea = \int_{\mea'\in\measures(\states)} \mea' \meatwo(\de{\mea'})$.
} 
over $\kernel_\lab(\sta)$, denoted by $s \patransa{\lab} \dis$. If there is no such distribution, the process halts. Otherwise, it moves into a successor state according to $\dis$. 
Considering convex combinations is necessary as it gives more power than pure resolution of non-determinism \cite{Segala:1996:MVR:239648}.

\begin{example}
If all sets are finite, we obtain \emph{probabilistic automata (PA)} defined \cite{Segala:1996:MVR:239648} as a triple $(\states, \labels, \patrans)$ where $\mathord{\patrans} \subseteq \states \times \labels \times \dist(\states)$ is a probabilistic transition relation with $(\sta,\lab,\dis)\in\mathord{\patrans}$ if $\dis\in\kernel_\lab(\sta)$.
\end{example}

\begin{example}\label{ex:cont}
In the continuous setting, consider a random number generator that also remembers the previous number. We set $\labels=[0,1]$, $\states = [0,1]\times[0,1]$ and $\kernel_{x}(\langle \textit{new},\textit{last}\rangle) = \{\mea_{x}\}$ for $x=\mathit{new}$ and $\emptyset$ otherwise, where $\mea_{x}$ is the uniform distribution on $[0,1]\times \{x\}$. If we start with a uniform distribution over $S$, the measure of successors under any $x\in\labels$ is $0$. Thus in order to get any information of the system we have to consider successors under sets of labels, e.g.\ intervals.
\end{example}


For a measurable set $\Lab\subseteq\labels$ of labels, we write $\sta \patransa{\Lab} \mea$ if $\sta \patransa{\lab} \mea$ for some $\lab\in\Lab$, and denote by $\states_\Lab := \{\sta \mid \exists \mea: \sta \patransa{\Lab} \mea\}$ the set of states having some outgoing label from $\Lab$.
Further, we can lift this to probability distributions by setting $\mea \patransa{\Lab} \meatwo$ if 
$\meatwo = \frac{1}{\mea(\states_\Lab)}\int_{\sta \in\states_\Lab} \meatwo_\sta\ \mea(d\,\sta)$
for some measurable function assigning to each state $\sta\in\states_\Lab$ a measure $\meatwo_\sta$ such that $\sta \patransa{\Lab} \meatwo_\sta$.
Intuitively, in $\mea$ we restrict to states that do not halt under $A$ and consider all possible combinations of their transitions; we scale up by $\frac{1}{\mea(\states_\Lab)}$ to obtain a distribution again.
\jena{maybe citing hyper-transitions from the PhD thesis Pedro has sent}

\begin{example}
In the previous example, let $\upsilon$ be the uniform distribution. Due to the independence of the random generator on previous values, we get $\upsilon\patransa{[0,1]}\upsilon$. Similarly, $\upsilon\xrightarrow{[0.1,0.2]}\upsilon_{[0.1,0.2]}$ where $\upsilon_{[0.1,0.2]}$ is uniform on $[0,1]$ in the first component and uniform on $[0.1,0.2]$ in the second component, with no correlation.
\end{example}

Using this notation, a non-deterministic and probabilistic system such
as NLMP can be regarded as a non-probabilistic, thus solely
non-deterministic\holger{how about this?}, labelled transition system
over the uncountable space of probability distributions.  The natural
bisimulation from this distribution perspective is as follows.

\begin{definition}\label{def:infinite-bisim}
Let $(\states, \labels, \{\kernel_\lab \mid \lab \in \labels\})$ be a NLMP and $R \subseteq \measures(\states) \times \measures(\states)$ be a symmetric relation. We say that $R$ is a (strong) \emph{probabilistic bisimulation} if for each $\mea \, R \, \meatwo$ and measurable $\Lab\subseteq\labels$
 \begin{enumerate}
  \item $\mea(\states_\Lab) = \meatwo(\states_\Lab)$, and
  \item for each $\mea \patransa{\Lab} \mea'$ there is a $\distwo \patransa{\Lab} \meatwo'$ such that $\mea' \, R \, \meatwo'$.
 \end{enumerate}
We set $\dis \bisim \distwo$ if there is a probabilistic bisimulation $R$ such that $\dis \, R \, \distwo$.
\end{definition}

\begin{example}\label{ex:cont2}
 Considering Example~\ref{ex:cont}, states $\{x\}\times[0,1]$ form a class of $\bisim$ for each $x\in[0,1]$ as the old value does not affect the behaviour. More precisely, $\dis\bisim\distwo$ iff marginals of their first component are the same.
\end{example}

\noindent\textbf{Naturalness.}
Our definition of bisimulation is not created ad-hoc as it often appears for relational definitions, but is actually an instantiation of the standard bisimulation for a particular \emph{coalgebra}. Although this aspect is not necessary for understanding the paper, it is another argument for naturalness of our definition. For reader's convenience, we present a short introduction to coalgebras and the formal definitions in \refappendix. Here we only provide an intuitive explanation by example.

Non-deterministic labelled transition systems are essentially given by the transition function $\states\to \powerset(\states)^\act$; given a state $s\in \states$ and a label $a\in\act$, we can obtain the set of the successors $\{s'\in\states\mid s\tran{a}s'\}$. The transition function corresponds to a coalgebra, which induces a bisimulation coinciding with the classical one of Park and Milner \cite{DBLP:books/daglib/0067019}. Similarly, PA are given by the transition function $\states\to \powerset(\measures(\states))^\act$; instead of successors there are distributions over successors. Again, the corresponding coalgebraic bisimulation coincides with the classical ones of Larsen and Skou~\cite{DBLP:conf/popl/LarsenS89} and Segala and Lynch \cite{DBLP:conf/concur/SegalaL94}.

In contrast, our definition can be obtained by considering states
$\states'$ to be distributions in $\measures(S)$ over the original
state space and defining the transition function to be $\states'\to
([0,1]\times\powerset(\states'))^{\Sigma(\act)}$. The difference to
the standard non-probabilistic case is twofold: firstly, we consider
all measurable sets of labels, i.e.\ all elements of $\Sigma(\act)$;
secondly, for each label set we consider the mass, i.e.\ element of
$[0,1]$, of the current state distribution that does not
deadlock\holger{reread this}, i.e.\ can perform some of the
labels. These two aspects form the crux of our approach and
distinguish it from other approaches.

\section{Applications}
\label{sec:applications}
\label{sec:applications!}


We now argue by some concrete application domains that the
distribution view on bisimulation yields a fruitful notion.

\subsection{Memoryless vs. memoryfull continuous time.}
\label{sec:applications-sa}

First, we reconsider the motivating discussion from
Section~\ref{sec:intro} revolving around the difference between
continuous time represented by real-valued clocks, respectively
memoryless stochastic time.
For this we introduce a simple model of \emph{stochastic
  automata}~\cite{DBLP:journals/iandc/DArgenioK05}.
\begin{definition}
 A \emph{stochastic automaton (SA)} is a tuple $\SA = (\locations,\clocks,\actions,\edges,\clocksetting, \clockdist)$ where
$\locations$ is a set of locations,
$\clocks$ is a set of clocks,
$\actions$ is a set of actions,
$\edges \;\subseteq \locations \times \actions \times 2^\clocks \times \locations$ is a set of edges,
$\clocksetting: \locations \to 2^\clocks$ is a clock setting function, and
$\clockdist$ assigns to each clock its distribution over $\Rsetpo$.
\end{definition}
Avoiding technical details, $\SA$ has the following NLMP semantics
$\pts_\SA$ with state space $\SAstates = \locations \times
\Rset^\clocks$, assuming it is initialized in some location $q_0$:
%
When a location $q$ is entered, for each clock $c \in \clocksetting(q)$ a positive \emph{value} is chosen randomly according to the distribution $\clockdist(c)$ and stored in the state space. 
Intuitively, the automaton idles in location $q$ with all all clock
values decreasing at the same speed until some edge $(q,a,\Clo,q')$
becomes \emph{enabled}, i.e. all clocks from $\Clo$ have value $\leq
0$. After this \emph{idling time} $t$, the action $a$ is taken and the
automaton enters the next location $q'$. If an edge is enabled on
entering a location, it is taken immediately, i.e. $t=0$. If more than
one edge become enabled simultaneously, one of them is chosen
non-deterministically. Its formal definition is given in~\refappendix.
%
%
%
We now are in the position to harvest
Definition~\ref{def:infinite-bisim}, to arrive at the
novel bisimulation for stochastic automata.

\begin{definition}\label{def:bisim-sta}
 We say that locations $\loc_1,\loc_2$ of an SA $\SA$ are \emph{probabilistic bisimilar}, denoted $\loc_1 \bisim \loc_2$, if $\mea_1 \bisim \mea_2$ in 
$\pts_\SA$ where each $\mea_i$ corresponds\holger{what is 'corresponds'?
  Is it the joint distribution with a marginal that is Dirac on
  $\loc_i$ and so forth?}\jena{not only this, the distribution is obtained as a product of these marginals, i.e. they are independent.}\holger{I was expecting so, but is this clear by the word 'corresponds'?}
 to the location being $\loc_i$, any $\clo \not\in \clocksetting(\loc_i)$ being $0$, and any $\clo \in \clocksetting(\loc_i)$ being independently set to a random value according to $\clockdist(\clo)$.

\end{definition}
This bisimulation identifies $\loctop$ and $\loctop'$ from
Section~\ref{sec:intro} unlike any previous bisimulation on
SA~\cite{DBLP:journals/iandc/DArgenioK05}. In
Section~\ref{sec:algorithms-infinite} we discuss how to
compute this bisimulation, despite being continuous-space. Recall that
the model initialized by $q$ is obtained by first translating two
simple CTMC, and then applying the natural interleaving semantics,
while the model, of $q'$ is obtained by first applying the equally
natural CTMC interleaving semantics prior to translation. The
bisimilarity of these two models generalizes to the whole universe of
CTMC and SA:

%
%

\begin{theorem}\label{thm:sta-commute}
  Let $SA(\mathcal{C})$ denote the stochastic automaton corresponding
  to a CTMC $\mathcal{C}$.  For any CTMC $\mathcal{C}_1,
  \mathcal{C}_2$, we have
 $$SA(\mathcal{C}_1) \parallel_{S\!A} SA(\mathcal{C}_1) 
 \;\; \bisim \;\;  
 SA(\mathcal{C}_1 \parallel_{CT} \mathcal{C}_1).$$ 
%
%
%
%
\end{theorem}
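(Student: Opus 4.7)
The heart of the claim is that both composed models become operationally indistinguishable as soon as one passes to the distribution-based view, because the exponential clocks that survive a transition on the left-hand side are, \emph{in distribution}, indistinguishable from freshly resampled clocks on the right-hand side. Concretely, in $SA(\mathcal{C}_1)\parallel_{SA}SA(\mathcal{C}_2)$ a clock belonging to the component that did not fire keeps its remaining value; in $SA(\mathcal{C}_1\parallel_{CT}\mathcal{C}_2)$ that clock is treated as a fresh sample, since the parallel composition is carried out at the CTMC level where memorylessness is built in. The plan is therefore to construct an explicit bisimulation $R$ between measures on the two NLMP semantics and to reduce verification of the two bisimulation clauses of Definition~\ref{def:infinite-bisim} to the classical memoryless identity $\Pr[X-t\in\cdot\mid X>t]=\Pr[X\in\cdot]$ for $X\sim\mathrm{Exp}(\lambda)$.

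\textbf{Candidate relation.}
Let $\pts_L$ and $\pts_R$ denote the NLMP semantics of $SA(\mathcal{C}_1)\parallel_{SA}SA(\mathcal{C}_2)$ and $SA(\mathcal{C}_1\parallel_{CT}\mathcal{C}_2)$, respectively. Both state spaces factor as (locations)$\times\mathbb{R}^{\clocks}$, with the same set of locations $\locations_1\times\locations_2$. I would define $\mu \mathrel{R} \nu$ iff (i) $\mu$ and $\nu$ induce the same marginal $\pi$ on locations, and (ii) for $\pi$-almost every location $(\loc_1,\loc_2)$, the conditional distribution of the surviving clock vector is, in both $\mu$ and $\nu$, the independent product of $\clockdist(c)=\exponential(\lambda_c)$ over the clocks that are currently "live" in that location (i.e.\ have been set but not yet triggered), and Dirac at $0$ for all other clocks. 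The initial measures from Definition~\ref{def:bisim-sta} lie in $R$ by construction, so it suffices to show that $R$ is a probabilistic bisimulation.

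\textbf{Checking the two clauses.}
For clause~(1), $\states_A$ in either semantics is determined by which edges can fire, which in turn depends only on the location and on which live clocks have value $\le 0$; because the two measures share the same location marginal and the live-clock marginals are continuous (hence put mass $0$ on $\{0\}$), the masses $\mu(\states_A)$ and $\nu(\states_A)$ coincide. For clause~(2), consider a lifted transition $\mu \patransa{A} \mu'$. It is obtained by (a) conditioning on the live clocks having not yet expired up to some measurable idling time $t$, (b) taking the first expiring clock via the racing order statistics of independent exponentials, and (c) updating the location and either keeping or resetting clocks according to the edge taken. The memoryless property ensures that after step~(a) the remaining live clocks of $\mu'$ are again independent $\exponential(\lambda_c)$'s, and the race distribution on (b) is determined only by the rates. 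Hence $\mu'$ again satisfies condition~(ii) of $R$, and one can build a matching $\nu \patransa{A} \nu'$ on the right where the newly resampled clocks are drawn from the same exponentials; symmetry handles the other direction.

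\textbf{Main obstacle.}
The genuinely delicate part is not the memorylessness itself but the bookkeeping needed to match the two notions of parallel composition at the level of \emph{lifted, convex-combination} transitions over sets of labels $A$, rather than individual real-valued delays. In particular, Definition~\ref{def:infinite-bisim} requires matching for every measurable $A\subseteq\labels$, so I would first establish a "per-edge" version of the memoryless identity (for a single racing clock among fixed competitors) and then lift it by an integration argument over $A$ using Fubini, exploiting that the clock densities are absolutely continuous. Once this lifted memoryless identity is in place, invariance of $R$ under $\patransa{A}$ follows routinely, and $\bisim$ then relates the two initial distributions, proving the theorem.
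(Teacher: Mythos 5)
Your proof rests on exactly the same key fact as the paper's: by memorylessness, the clocks that survive a transition of $SA(\mathcal{C}_1)\parallel_{S\!A}SA(\mathcal{C}_2)$ are distributed as freshly sampled exponentials, so on both sides every reachable distribution has the form (location) $\otimes$ (independent product of $\mathrm{Exp}(\lambda_c)$ over the currently live clocks). The route differs in packaging. The paper does not verify Definition~\ref{def:infinite-bisim} directly on the continuous semantics; it invokes the machinery of Section~\ref{sec:algorithms-infinite}: Lemma~\ref{lem:expo-finite} already establishes that all reachable states of the finite abstraction $\chain$ have this product form (the memoryless computation $P[X-Y>t\mid X>Y]$ lives there), and the theorem then follows by observing that each location $(s_1,s_2)$ on the right corresponds to three locations $(s_1,s_2,b)$, $b\in\{0,1,2\}$, on the left with the same edges and the same clock distributions, so the corresponding states of $\chain_L$ and $\chain_R$ are indistinguishable by the \tableau{Step} rule, and Lemma~\ref{prop:correctness} yields bisimilarity. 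Your direct construction of $R$ and hand-verification of the two clauses is a legitimate, more self-contained alternative, but it re-proves inline what Lemmas~\ref{lem:expo-finite} and~\ref{prop:correctness} already provide, and it is precisely at your self-identified ``main obstacle'' (lifting the per-edge memoryless identity to arbitrary measurable label sets $A$) that the paper's detour through the discrete abstraction pays off, since there the matching reduces to comparing finitely many discrete successor distributions. One inaccuracy to fix: the two systems do not have ``the same set of locations $\locations_1\times\locations_2$'' --- the $S\!A$-composition carries an extra component $b\in\{0,1,2\}$ recording which side moved last, and the clock name spaces of the two systems differ --- so your relation $R$ must be stated modulo the projection that forgets $b$ and modulo a rate-preserving bijection of live clocks; the paper handles this explicitly by noting that the three locations $(s_1,s_2,b)$ differ only in $\clocksetting$, which is immaterial once every live clock is distributed as a fresh exponential.
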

Here,  $\parallel_{CT}$ and $\parallel_{S\!A}$
denotes the interleaving parallel composition of
SA~\cite{DBLP:journals/iandc/DArgenioK05a} (echoing TA parallel
composition) and CTMC~\cite{pepa,HHM} (Kronecker sum of their matrix
representations), respectively. 
%

\holger{dropped poulation model discussion. Do not understand it. It
  smells like you are discussing the (deterministic) ``mean field'' of
  an infinite population Markov models. Do you?}  \todo{Jan: basically
  yes. have they defined bisimulation for that? depends how the
  deterministic system is defined. what do they do when some processes
  cannot execute the action? (die? wait? something else?)}
\holger{Notably, in the mean-field world I know of, the population
  models in this context are CT(!)MCs, and actions, resource
  contention, nondeterminism and the like are difficult to formulate,
  unless you move out of the CTMC realm....}
\jena{Yes, mean-field is continuous-time. However, the references I
  digged out talk about discrete-time models. Mentioning shorter in
  Further applications is perfect for me.}\holger{Am unable to do it,
  because I do not understand.}

\subsection{Bisimulation for partial-observation MDP (POMDP).} 

A POMDP is a quadruple $\mathcal{M}=(\states,\labels,\delta, \mathcal{O})$ where (as in an MDP) $\states$ is a set of states, $A$ is a set of actions, and $\delta: \states \times \actions \to \dist(\states)$ is a transition function. Furthermore, $\mathcal{O} \subseteq 2^\states$ partitions the state space. The choice of actions is resolved by a policy yielding a Markov chain. Unlike in an MDP, such choice is not based on the knowledge of the current state, only on knowing that the current state belongs into an \emph{observation} $o \in \mathcal{O}$.
 POMDPs have a wide range of applications in robotic control, automated planning, dialogue systems, medical diagnosis, and many other areas~\cite{DBLP:journals/aamas/ShaniPK13}.

In the analysis of POMDP, the distributions over states, called \emph{beliefs}, arise naturally. They allow for transforming the POMDP $\mathcal{M}$ into a fully observable NLMP $\PA_{\mathcal{M}} = (\states,\mathcal{O}, \patrans)$ with continuous space, by setting $(\sta, \patransa{o},\dis) \in \patrans$ if $\sta \in o$ and $\delta(\sta,a) = \dis$ for some $a\in A$. Although probabilistic bisimulations over beliefs have been already considered~\cite{DBLP:conf/ijcai/CastroPP09,DBLP:conf/nfm/JansenNZ12}, no connection of this particular case to general probabilistic bisimulation has been studied. 
We can set $\dis \bisim \dis'$ in $\mathcal{M}$ if $\dis \bisim \dis'$ in $\PA_\mathcal{M}$.
In Section~\ref{sec:algorithms-finite}, we shall provide an algorithm for computing bisimulations over beliefs in finite POMDP. Previously, there was only an algorithm~\cite{DBLP:conf/nfm/JansenNZ12} for computing bisimulations on distributions of Markov \emph{chains} with partial observation. 

\subsection{Further applications.}
Probabilistic automata are especially apt for compositional modelling
of \emph{distributed systems}.  The only information a component in a
distributed system has about the current state of another component
stems from their mutual communication. Therefore, each component can
be also viewed from the outside as a partial-observation system.  Thus, also in this
context, distribution bisimulation is a natural concept. 

Furthermore we can  understand a PA as a description, in the sense
of~\cite{DBLP:journals/deds/GastG11,may1974biological}, 
 of a representative
\emph{agent} in a large homogeneous \emph{population}.
The distribution view then naturally represents the ratios of agents being currently in the individual
states and labels given to this large population of PAs correspond to global control actions~\cite{DBLP:journals/deds/GastG11}.
For more details on applications, see~\refappendix.

\todo{explain more}

\todo{mention the population model thougts.}


\section{Algorithms}
\label{sec:algorithms}

In this section, we discuss computational aspects of deciding our
bisimulation. Since $\bisim$ is a relation over distributions over the
system's state space, it is uncountably infinite even for simple
finite systems, which makes it in principle intricate to
decide. Fortunately, the bisimulation relation has a linear structure,
and this allows us to employ methods of linear algebra to work with it
effectively. Moreover, important classes of continuous-space
systems can be dealt with, since their structure can be exploited. We
exemplify this on a subset of deterministic stochastic
automata, for which we are able to provide an algorithm to decide
bisimilarity.

\subsection{Finite systems -- greatest fixpoints}
\label{sec:algorithms-finite}

Let us fix a PA $(\states, \labels, \patrans)$. We apply the standard approach by starting with $\dist(\states) \times \dist(\states)$ and pruning the relation until we reach the fixpoint $\bisim$.  
In order to represent $\bisim$ using linear algebra, we identify a distribution $\dis$ with a vector $(\dis(\sta_1),\ldots,\dis(\sta_{|\states|}))\in\Rset^{|\states|}$. 

Although the space of distributions is uncountable,
we construct an implicit representation of $\bisim$ by a system of equations written as columns in a matrix $\B$.
\begin{definition}
 A matrix $\B$ with $|\states|$ rows is a \emph{bisimulation matrix} if 
 for some bisimulation $R$, for any distributions $\dis,\distwo$  
 $$\dis\, R\,\distwo \;\;\;\text{iff}\;\;\; (\dis-\distwo)\B=0.$$
\end{definition}

\noindent
For a bisimulation matrix $\B$, an equivalence class of $\dis$ is then the set $(\dis+\{\rho\mid\rho\B=0\})\cap\dist(\states)$, the set of distributions that are equal modulo $\B$.

\begin{example}\label{ex:bisim-matrix}
The bisimulation matrix $\B$ below encodes that several conditions must hold for two distributions $\dis,\distwo$ to be bisimilar. 
Among others, if we multiply $\dis-\distwo$ with e.g.\ the second column, we must get $0$.
This translates to $(\dis(v)-\distwo(v))\cdot1=0$, i.e. $\dis(v)=\distwo(v)$.
Hence for bisimilar distributions, the measure of $v$ has to be the same. 
This proves that $u\not\bisim v$ (here we identify states and their Dirac distributions).
Similarly, we can prove that $\;t \;\bisim\; \frac{1}{2} t' + \frac{1}{2} t''$. Indeed, if we multiply the corresponding difference vector $(0,0,1,-\frac12,-\frac12,0,0)$ with any column of the matrix, we obtain $0$.

%

\begin{tikzpicture}[outer sep=0.1em,->, xscale=0.9, yscale=1.2,
state/.style={draw,circle,minimum size=1.6em,inner sep=0.1em}]
\begin{scope}
\node[state] (s) at (0,0.5) {$s$};
\node[state] (t) at (1,0.5) {$t$};
\node[state] (u) at (2,1) {$u$};
\node[state] (v) at (2,0) {$v$};
\path[->] 
(s) edge node[above] {$a$} (t)
(t) edge node[above]{\textonehalf} node[below]{$a$}(u)
(t) edge node[below]{\textonehalf} (v)
(u) edge[loop above,looseness=4] node[right]{$b$} (u)
(v) edge[loop above,looseness=4] node[right]{$c$} (v)
;
\node[state] (s') at (4,0.5) {$s'$};
\node[state] (t1) at (3,1) {$t'$};
\node[state] (t2) at (3,0) {$t''$};
\path[->] 
(s') edge node[below] {$a$} node[above]{\textonehalf} (t1)
(s') edge  node[below]{\textonehalf} (t2)
(t1) edge node[below] {$a$} (u)
(t2) edge node[below] {$a$} (v)
;
\end{scope}
\begin{scope}[xshift=20em,yshift=16,scale=0.8, every node/.style={scale=0.8},]
\node [font=\small,gray] at(-2.3,.05){
$\begin{array}{c}
  {s:} \\
  {s':} \\
  {\footnotesize t:} \\
  {\footnotesize t':} \\
  {\footnotesize t'':} \\
  {\footnotesize u:} \\
  {\footnotesize v:}
 \end{array}$};
 
 \node [font=\small] at(0,0){
$\left(\begin{array}{ccccc}
1&0&0&0&0\\
1&0&0&0&0\\
1&0&0&\text{\textonehalf}&\text{\textonehalf}\\
1&0&0&0&1\\
1&0&0&1&0\\
1&0&1&0&0\\
1&1&0&0&0
\end{array}\right)$};
\end{scope}
\end{tikzpicture}
\end{example}

Note that the unit matrix is always a bisimulation matrix, not relating anything with anything but itself. For which bisimulations do there exist bisimulation matrices?
We say a relation $R$ over distributions is \emph{linear} if $\dis R \distwo$ and $\dis' R \distwo'$ imply $\big(p\dis+(1-p)\dis'\big)\;R\;\big( p\distwo+(1-p)\distwo'\big)$ for any $p\in[0,1]$.

\begin{lemma}\label{lem:existence}
For every linear bisimulation there exists a corresponding bisimulation matrix.
\end{lemma}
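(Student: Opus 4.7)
My plan is to derive $\B$ from the linear-algebraic shadow of $R$ and then verify that $R$ is exactly captured by its left kernel. I would set $V = \{\dis - \distwo \mid \dis\,R\,\distwo\} \subseteq \Rset^{|\states|}$ and use the linearity of $R$ together with the reflexivity and symmetry inherent in a bisimulation to observe that $V$ contains $0$, satisfies $V = -V$, is convex, and is closed under scaling by $[-1,1]$. For instance, from the pairs $\dis\,R\,\distwo$ and $\distwo\,R\,\distwo$, linearity with coefficient $\alpha \in [0,1]$ yields $(\alpha\dis + (1-\alpha)\distwo)\,R\,\distwo$, whose difference is $\alpha(\dis-\distwo)$, giving the scaling; convexity comes directly from the definition of linearity applied to two arbitrary pairs in $R$.

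Let $L = \operatorname{span}(V)$ be the linear hull of $V$ inside the hyperplane $\{x \in \Rset^{|\states|} \mid \sum_i x_i = 0\}$, and take $\B$ to be a matrix with $|\states|$ rows whose columns form a basis of the orthogonal complement $L^\perp$; by construction $x\B = 0$ iff $x \in L$, so the forward implication $\dis\,R\,\distwo \Rightarrow \dis-\distwo \in V \subseteq L \Rightarrow (\dis-\distwo)\B = 0$ is immediate. The converse is the substance of the proof: whenever $\dis,\distwo \in \dist(\states)$ and $\dis-\distwo \in L$, one must show $\dis\,R\,\distwo$. Expanding $\dis-\distwo = \sum_i \alpha_i v_i$ with $v_i \in V$ and using $V = -V$ to make all coefficients non-negative, then normalising, one rewrites $\dis-\distwo = \beta\bar v$ with $\bar v \in V$ obtained as a convex combination of the $v_i$'s and $\beta \geq 0$. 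When $\beta \leq 1$ the scaling-closure of $V$ puts $\dis-\distwo$ directly into $V$, so $\dis\,R\,\distwo$ follows at once.

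When $\beta > 1$, I would interpolate along the segment $\chi(t) = (1-t)\distwo + t\dis$, $t \in [0,1]$, and choose a partition $0 = t_0 < \cdots < t_N = 1$ so fine that each increment $\chi(t_{i+1})-\chi(t_i) = (t_{i+1}-t_i)\beta\bar v$ is a convex combination of small scalar multiples of $R$-pairs (padded by the reflexive $0 \in V$). With the ``filler'' distribution in this convex combination chosen as $\chi(t_i)$, the two witnesses of the resulting $R$-step coincide with $\chi(t_i)$ and $\chi(t_{i+1})$, so the step is itself an $R$-pair, and transitivity concatenates the chain to give $\dis\,R\,\distwo$. The main obstacle is precisely this last alignment of witnesses with the interpolating points, together with the appeal to transitivity: for the bisimilarity $\bisim$ transitivity is automatic, and for a general linear bisimulation one first replaces $R$ by its equivalence closure, which remains a linear bisimulation because chains of $R$-pairs can be convexly combined componentwise to form chains again, preserving both linearity and the bisimulation conditions of Definition~\ref{def:infinite-bisim}.
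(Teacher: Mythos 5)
Your construction of $\B$ from $L=\operatorname{span}(V)$ and the forward implication are fine, but the converse direction has a genuine gap, and it sits exactly at the point you gloss over in the case $\beta\le 1$. Knowing that $\dis-\distwo\in V$ only tells you that $\dis-\distwo=\dis''-\distwo''$ for \emph{some} pair $\dis''\,R\,\distwo''$; it does not tell you that the specific pair $(\dis,\distwo)$ lies in $R$, because nothing in the definition of a linear bisimulation says that $R$ is determined by differences. Establishing that translation invariance (within the simplex) is essentially the whole content of the lemma, so ``follows at once'' is circular. Your attempted repair in the case $\beta>1$ runs into the same wall: to realise the step $\chi(t_i)\,R\,\chi(t_{i+1})$ as a convex combination of the pairs $(\dis_k,\distwo_k)$ with weights $\eps\gamma_k$ padded by a reflexive filler $(\phi,\phi)$, the filler is forced to be $\phi=\tfrac{1}{1-\eps}\bigl(\chi(t_i)-\eps\sum_k\gamma_k\distwo_k\bigr)$ (it cannot literally be $\chi(t_i)$, or the right-hand component would not equal $\chi(t_i)$), and this $\phi$ is a probability distribution only if $\chi(t_i)(s)\ge\eps\sum_k\gamma_k\distwo_k(s)$ for every state $s$. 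If $\chi(t_i)$ gives mass $0$ to a state charged by some $\distwo_k$, no $\eps>0$ works and refining the partition does not help. So the ``alignment of witnesses'' that you correctly identify as the main obstacle is in fact not overcome.

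The paper takes a route that avoids chaining witnesses altogether: it fixes an equivalence class $\Gamma$ of $R$, passes to its affine hull $\bar\Gamma$, observes that all classes share one common difference space $\bar\Delta$ (playing the role of your $L$), and uses the key geometric fact $\bar\Gamma\cap\dist(\states)=\Gamma$. Then $(\dis-\distwo)\B=0$ with $\dis\in\Gamma$ gives $\distwo\in\bar\Gamma$, hence $\distwo\in\Gamma$, hence $\dis\,R\,\distwo$ in one step. To salvage your approach you would need to prove a statement of that kind --- that each class is the full intersection of its affine hull with the simplex --- which is where the real work lies. A smaller point: like the paper, you implicitly need $R$ to be reflexive and transitive (the kernel condition always defines an equivalence), but passing to the equivalence closure of $R$ produces a matrix corresponding to the closure rather than to $R$ itself, so this should be made a hypothesis rather than patched at the end.
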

\jena{no proof sketches? I would then state it as one lemma and give this splitting into two parts as a two-line sketch}
Since $\bisim$ is linear (see \refappendix), there is a bisimulation matrix corresponding to $\bisim$. 
It is a least restrictive bisimulation matrix $\B$ (note that
all bisimulation matrices with the least possible dimension have identical solution space), we call it \emph{minimal bisimulation matrix}.
We show that the necessary and sufficient condition for $E$ to be a bisimulation matrix is \emph{stability} with respect to transitions. 

\begin{definition}
For a $|\states|\times|\states|$ matrix $P$, we say that a matrix $\B$ with $|\states|$ rows is \emph{$P$-stable} if for every $\rho\in\Rset^{|\states|}$, 
\begin{align}
\rho \B=0 \implies \rho P \B=0 
\end{align}
\end{definition}

\noindent
We first briefly explain the stability in a simpler setting.

\subsubsection{Action-deterministic systems.}  
%
%
Let us consider PA where in each state, there is at most one transition. For each $a\in\act$, we let $P_a=(p_{ij})$ denote the transition matrix such that for all $i,j$, if there is (unique) transition $\sta_i\patransa{a} \mu$ we set $p_{ij}$ to $\mu(\sta_j)$, otherwise to $0$. Then $\dis$ evolves under $a$ into $\dis P_a$. Denote $\vec{1}=(1,\ldots,1)^\top$.
\begin{proposition}\label{prop-alg-determ}
In an action-deterministic PA, $\B$ containing $\vec 1$ is a bisimulation matrix iff it is  $P_\lab$-stable for all $\lab\in\labels$.
\end{proposition}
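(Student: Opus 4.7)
The plan is to show both implications by exploiting that in an action-deterministic PA, the $\Lab$-successor of any distribution $\mu$ is uniquely determined by the matrix $P_\Lab := \sum_{\lab\in \Lab} P_\lab$, namely $\mu' = \mu P_\Lab/\mu(\states_\Lab)$ whenever $\mu(\states_\Lab) > 0$. Throughout, I would work with the relation $R_\B$ defined by $\mu\, R_\B\, \nu$ iff $(\mu-\nu)\B = 0$; then $\B$ is a bisimulation matrix exactly when $R_\B$ is itself a bisimulation.

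For the ($\Leftarrow$) direction I would assume $\B$ contains $\vec 1$ and is $P_\lab$-stable for every $\lab$, and verify the two bisimulation conditions for $R_\B$. Condition 1 follows by applying stability to the column $\vec 1 \in \B$: in an action-deterministic PA the vector $P_\lab\vec 1$ is the indicator of $\states_\lab$, so stability gives $(\mu-\nu)P_\lab\vec 1 = 0$, i.e., $\mu(\states_\lab) = \nu(\states_\lab)$; action-determinism also makes the sets $\states_\lab$ pairwise disjoint, so the equality extends to $\mu(\states_\Lab) = \nu(\states_\Lab)$ for every $\Lab$. For condition 2, given $\mu \patransa{\Lab} \mu' = \mu P_\Lab/\mu(\states_\Lab)$, I would match with $\nu' := \nu P_\Lab/\nu(\states_\Lab)$. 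The equality $\mu(\states_\Lab) = \nu(\states_\Lab)$ turns $\mu'-\nu'$ into a scalar multiple of $(\mu-\nu)P_\Lab$, and $P_\Lab$-stability (which follows from linearity of the stability condition together with $P_\lab$-stability for each $\lab\in \Lab$) yields $(\mu'-\nu')\B = 0$, i.e., $\mu'\, R_\B\, \nu'$.

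For the ($\Rightarrow$) direction I would start from arbitrary $\rho\in\Rset^{|\states|}$ with $\rho \B = 0$ and reduce to the case of a difference of two distributions. Because $\vec 1$ is a column of $\B$, we have $\rho\vec 1 = 0$, so the Jordan decomposition $\rho = \rho^+ - \rho^-$ has a common total mass $c\ge 0$; either $c=0$ and $\rho P_\lab \B = 0$ trivially, or $\rho = c(\mu-\nu)$ with $\mu, \nu \in \dist(\states)$. The hypothesis $\rho \B=0$ gives $\mu\, R_\B\, \nu$, hence $\mu(\states_\lab) = \nu(\states_\lab)$ by condition 1; if this common value is zero then $\mu P_\lab = \nu P_\lab = 0$, otherwise the unique $\lab$-successors $\mu P_\lab/\mu(\states_\lab)$ and $\nu P_\lab/\nu(\states_\lab)$ are $R_\B$-related by condition 2, and clearing the (equal) denominators gives $(\mu-\nu)P_\lab \B = 0$, i.e., $\rho P_\lab \B = 0$.

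The main obstacle I anticipate is precisely the reduction of the stability statement, which quantifies over \emph{arbitrary} $\rho\in\Rset^{|\states|}$, to a statement about differences $\mu-\nu$ of probability distributions; this is where the explicit assumption $\vec 1 \in \B$ enters and explains why the proposition singles this column out. The remaining work is routine bookkeeping about normalization and the boundary case $\mu(\states_\Lab) = 0$.
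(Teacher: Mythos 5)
Your proof is correct and follows essentially the same route as the paper's: the ($\Rightarrow$) direction reduces arbitrary $\rho$ with $\rho\B=0$ to a (scaled) difference of distributions via the $\vec 1$ column, and the ($\Leftarrow$) direction verifies the two bisimulation conditions from stability, using that the sets $\states_\lab$ are pairwise disjoint to pass from singletons to general $\Lab$. You are in fact somewhat more explicit than the paper about normalization and the $\mea(\states_\Lab)=0$ boundary case, which is a welcome tightening rather than a deviation.
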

To get a minimal bisimulation matrix $\B$, we start with a single vector $\vec{1}$ which stands for an equation saying that the overall probability mass in bisimilar distributions is the same. Then we repetitively multiply all vectors we have by all the matrices $P_a$ and add each resulting vector to the collection if it is linearly independent of the current collection, until there are no changes.
In Example \ref{ex:bisim-matrix}, the second column of $\B$ is obtained as $P_c\vec1$, the fourth one as $P_a(P_c\vec1)$ and so on.

The set of all columns of $\B$ is thus given by the described iteration $$\{P_a\mid a\in\act\}^*\vec{1}$$
modulo linear dependency.  Since $P_a$ have $|\states|$ rows, the
fixpoint is reached within $|\states|$ iterations yielding $1\leq d\leq|\states|$
equations\jena{note that \# of iters does not equal \# of equations}. Each class then forms an $(|\states|-d)$-dimensional
affine subspace intersected with the set of probability distributions
$\dist(\states)$. This is also the principle idea behind the algorithm of~\cite{DBLP:journals/siamcomp/Tzeng92} and~\cite{DBLP:journals/ijfcs/DoyenHR08}.
%

\subsubsection{Non-deterministic systems.}

\newcommand{\PaW}{P_A^W}
\newcommand{\PaWc}{P_A^{W(c)}}
\newcommand{\bigC}{C}
\newcommand{\rohy}{\corners(\bigC)}

In general, for transitions under $A$, we have to consider $c_i^A$ non-deterministic choices in each $\sta_i$ among all the outgoing transitions under some $a\in A$.
We use variables $w_i^j$ denoting the probability that $j$-th transition, say $(\sta_i,a_i^j,\dis_i^j)$, is taken by the scheduler/player\footnote{
We use the standard notion 
of Spoiler-Duplicator bisimulation game (see e.g.~\cite{Sangiorgi:2011:ATB:2103601})
where in $\{\mu_0,\mu_1\}$ Spoiler chooses $i\in\{0,1\},A\subseteq\act$, and $\dis_i \tran{A} \dis_i'$, Duplicator has to reply with $\dis_{1-i}\tran{A}\dis_{1-i}'$ such that $\dis_i(\states_A) = \dis_{i-1}(\states_A)$, and the game continues in $\{\mu_0',\mu_1'\}$. Spoiler wins iff at some point Duplicator cannot reply.
}
in $\sta_i$. We sum up the choices into a ``non-deterministic'' transition matrix $\PaW$ with parameters $W$ whose $i$th row equals $\sum_{j=1}^{c_i^A} w^j_i \dis_i^j$. It describes where the probability mass moves from $\sta_i$ under $A$ depending on the collection $W$ of the probabilities the player gives each choice. By $\choices_A$ we denote the set of all such $W$.
%
%
%
%

A simple generalization of the approach above would be to consider
$\{P_A^W\mid A\subseteq\act, W\in\choices_A\}^* \vec{1}$. However,
firstly, the set of these matrices is uncountable whenever there are
at least two transitions to choose from. Secondly, not all $P_A^W$ may be used as the following example shows.

\begin{example}
In each bisimulation class in the following example, the probabilities
of $s_1 + s_2$, $s_3$, and $s_4$ are constant, as can also be seen from the bisimulation matrix $\B$, similarly to Example \ref{ex:bisim-matrix}.
Further, $\B$ can be obtained as $(\vec{1}\;\, P_c \vec{1} \;\, P_b \vec{1})$. 
Observe that $\B$ is $P_{\{a\}}^{W}$-stable for $W$ that maximizes the probability of going into the ``class'' $s_3$ (both $s_1$ and $s_2$ go to $s_3$, i.e. $w_1^1 = w_2^1 = 1$); similarly for the ``class''~$s_4$.
%
%

\noindent
\begin{center}
\begin{tikzpicture}[x=2.5cm,y=1.2cm,outer sep=1mm,yscale=0.8,
state/.style={circle,draw,minimum size=1.6em,inner sep=0.1em},
trans/.style={font=\scriptsize,->}]
\node[state] (s) at (0,0) {$\sta_1$};
\node[state] (t) at (0,-1) {$\sta_2$};
\node[state] (u) at (0.6,0) {$\sta_3$};
\node[state] (v) at (0.6,-1) {$\sta_4$};
\path[trans] (s) edge node[pos=0.3,trans,above=-2]{$a$} (u);
\path[trans] (s) edge node[pos=0.3,trans,above=-2]{$a$} (v);
\path[trans] (t) edge node[pos=0.3,trans,below]{$a$} (u);
\path[trans] (t) edge node[pos=0.3,trans,below]{$a$} (v);
\path[trans] (u) edge[loop right,looseness=5] node[trans,below]{$b$} (u);
\path[trans] (v) edge[loop right,looseness=5] node[trans,above]{$c$} (v);
\begin{scope}[scale=0.8, every node/.style={scale=0.8},]
\node at(2.5,-0.7){
$P_{\{a\}}^W=\left(\begin{array}{cccc}
0&0&w_1^1&w_2^2\\
0&0&w_2^1&w_2^2\\
0&0&0&0\\
0&0&0&0
\end{array}\right)$};

\node at(4.4,-0.7){
$E=\left(\begin{array}{ccc}
1&0&0\\
1&0&0\\
1&0&1\\
1&1&0
\end{array}\right)$};

\end{scope}
\end{tikzpicture}
\end{center}

However, for $W$ with $w_1^1\neq w_2^1$, e.g.\ $s_1$ goes to $s_3$ and $s_2$ goes with equal probability to $s_3$ and $s_4$ ($w_1^1=1, w_2^1=w_2^2 = \frac{1}{2}$), we obtain from $P_{\{a\}}^{W}E$ a new independent vector $(0,0.5,0,0)^\top$ enforcing a partition finer than $\bisim$.
%
This does not mean that Spoiler wins the game when choosing such mixed $W$ in some $\mu$, it only means that Duplicator needs to choose a \emph{different} $W'$ in a bisimilar $\nu$ in order to have $\mu P_A^W \bisim \nu P_A^{W'}$ for the successors.

\end{example}

A fundamental observation is that we get the correct bisimulation when Spoiler is restricted to finitely many ``extremal'' choices and 
Duplicator is restricted for such extremal $W$ to respond only with the very same $W$.
%
%
%

\jena{from here on extremely technical, at least correct hopefully}
To this end, consider $M_A^W=P_A^W \B$ where $\B$ is the current matrix with each of $e$ columns representing an equation. Intuitively, the $i$th row of $M_A^W$ describes how much of $\sta_i$ is moved to various classes when a step is taken. Denote the linear forms in $M_A^W$ over $W$ by $m_{ij}$.
Since the players can randomize and mix choices which transition to take, the set of vectors $\{(m_{i1}(w_i^1,\ldots,w_i^{c_i}),\ldots,m_{ib}(w_i^1,\ldots,w_i^{c_i}))\mid w_i^1,\ldots,w_i^{c_i}\geq 0,\sum_{j=1}^{c_i}w_i^j=1\}$ forms a convex polytope denoted by $C_i$. Each vector in $C_i$ is thus the $i$th row of the matrix $M_A^W$ where some concrete weights $w_i^j$ are ``plugged in''. This way $C_i$ describes all the possible choices in $\sta_i$ and their effect on where the probability mass is moved.

Denote vertices (extremal points) of a convex polytope $P$ by $\corners(P)$. Then $\corners(C_i)$ correspond to pure (non-randomizing) choices that are ``extremal'' w.r.t.~$\B$. Note that now if $\sta_j\sim \sta_k$ then $C_j=C_k$, or equivalently $\corners(C_j)=\corners(C_k)$. Indeed, for every choice in $\sta_j$ there needs to be a matching choice in $\sta_k$ and vice versa.
However, since we consider bisimulation between generally non-Dirac distributions, we need to combine these 
extremal choices. We define the set $\corners(\bigC)\subseteq\prod_{i=1}^{|\states|}\corners(C_i)$ to contain a tuple\jena{sounds like convex combinations, puzzled me a bit} $c=(c_1\ \cdots\ c_{|\states|})$ iff the $c_i$'s are ``extremal in (some) same direction'', i.e. 
$\sum_{i=1}^{|\states|} c_i$ is a vertex (extremal choice) of the polytope generated by points $\{\sum_{i=1}^{|\states|}c_i'\mid\forall i: c_i'\in C_i\}$.
Each $c\in\corners(\bigC)$ is a tuple of vertices, 
and thus corresponds to particular choices, denoted by $W(c)$. 
 \jena{would prefer to have the algorithm top aligned on the page; otherwise, put it between Prop 2 and Thm 2. Here it interrupts the flow of the explanation.}

\begin{proposition}\label{prop-alg-nondeterm}
Let $\B$ be a matrix containing $\vec 1$. It is a bisimulation matrix iff it is $\PaWc$-stable for all $A\subseteq\act$ and $c\in\corners(\bigC)$.
\end{proposition}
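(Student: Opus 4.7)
The approach is to define $R := \{(\mu,\nu) \in \dist(\states)^2 \mid (\mu-\nu)E=0\}$ as the candidate bisimulation and argue both implications via the geometry of the Minkowski-sum polytope $V_\mu := \sum_i \mu(s_i)\, C_i$. The key fact linking algebra to geometry is that, as $W$ ranges over $\choices_A$, the vector $\mu P_A^W E$ traces out exactly $V_\mu$, whose vertices are precisely $\{\mu P_A^{W(c)} E \mid c \in \corners(\bigC)\}$: the extremal-direction criterion that defines $\corners(\bigC)$ is preserved under positive scalings of the summands $C_i$, so $\corners(\bigC)$ provides a $\mu$-independent indexing of the vertices of every $V_\mu$.

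For ($\Leftarrow$), assume $\PaWc$-stability for every $A$ and $c \in \corners(\bigC)$. Condition~1 of Definition~\ref{def:infinite-bisim} follows immediately because $\vec{1}$ is a column of $E$ and $P_A^{W(c)}\vec{1}$ equals the indicator of $\states_A$, so $(\mu-\nu)P_A^{W(c)}\vec{1}=0$ yields $\mu(\states_A)=\nu(\states_A)$. For Condition~2, given Spoiler's move $\mu \patransa{A} \mu P_A^W/\mu(\states_A)$ for an arbitrary $W \in \choices_A$, I would decompose $\mu P_A^W E = \sum_{c} \lambda_c\, \mu P_A^{W(c)} E$ as a convex combination of vertices of $V_\mu$. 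Stability rewrites each term as $\nu P_A^{W(c)} E$, and by linearity of $W \mapsto P_A^W$ the combination $W' := \sum_c \lambda_c W(c) \in \choices_A$ satisfies $\nu P_A^{W'} E = \mu P_A^W E$, yielding Duplicator's matching response $\nu \patransa{A} \nu P_A^{W'}/\nu(\states_A)$.

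For ($\Rightarrow$), take any $\rho$ with $\rho E=0$; since $\vec 1 \in E$ forces $\sum_i \rho_i=0$, the positive/negative decomposition writes $\rho = \alpha(\mu-\nu)$ with $\mu,\nu$ probability distributions and $\mu R \nu$, reducing the goal to $(\mu-\nu) P_A^{W(c)} E = 0$. Applying Condition~2 of bisimulation symmetrically to $\mu$ and $\nu$ shows that every point of $V_\mu$ is matched by a point of $V_\nu$ and vice versa, so $V_\mu = V_\nu$ as polytopes. The value $\mu P_A^{W(c)} E$ is the vertex of $V_\mu$ extremal in the direction witnessing $c$; by the $\mu$-independent indexing, the vertex of $V_\nu$ in the same direction is $\nu P_A^{W(c)} E$; equality of polytopes forces these vertices to coincide.

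The main obstacle is the Minkowski-vertex lemma underpinning both directions: that the vertices of $V_\mu$ are in a scaling-invariant bijection with $\corners(\bigC)$. A prominent consequence, useful in ($\Rightarrow$), is that any $c \in \corners(\bigC)$ must assign the same vertex to states in a common bisimulation class; indeed, if $c_i \neq c_k$ for bisimilar $s_i \sim s_k$ (so $C_i = C_k$), one may form $p_1 = 2c_i + \sum_{j \neq i,k} c_j$ and $p_2 = 2c_k + \sum_{j \neq i,k} c_j$, both lying in $\sum_j C_j$ with midpoint $\sum_j c_j$, contradicting extremality. Once this geometric scaffolding is in place, together with the straightforward checks that $\sum_c \lambda_c W(c)$ is a valid element of $\choices_A$ and that $P_A^{\cdot}$ is linear in its weight argument, the remaining content of the proposition reduces to routine algebra.
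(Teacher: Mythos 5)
Your proposal is correct and follows essentially the same route as the paper: both arguments hinge on the polytopes $C_i$, the equivalence between $\mu\bisim\nu$ and equality of the weighted Minkowski sums $\sum_i\mu(s_i)C_i=\sum_i\nu(s_i)C_i$, and the $\mu$-independent indexing of their vertices by $\corners(\bigC)$ via common extremal directions. Your explicit decompose-into-vertices-and-recombine construction of Duplicator's response $W'=\sum_c\lambda_c W(c)$ is just a slightly more constructive rendering of the paper's reduction of the quantifier over all $W\in\choices_A$ to the extremal choices.
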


\begin{algorithm}[ht]
\myspace
\SetAlgoLined
\DontPrintSemicolon
\SetKwInOut{Parameter}{parameter}\SetKwInOut{Input}{Input}\SetKwInOut{Output}{Output}
\SetKwData{C}{C}\SetKwData{D}{D}\SetKwData{MX}{M}\SetKwData{f}{f}

\Input{Probabilistic automaton $(\states,\act,\tran{})$}
\Output{A minimal bisimulation matrix $E$}
\BlankLine
\ForEach{$A\subseteq\act$}
{compute $\PaW$ \hfill\texttt{// }\textsf{non-deterministic transition matrix~~~~~}}
$\B\gets(\vec{1})$\;
\Repeat{$\B$ does not change}
{
\ForEach{$A\subseteq\act$}
{
$M_A^W\gets \PaW\B$ \hfill~~~~~~\texttt{// }\textsf{polytope of all choices}\;
compute $\corners(\bigC)$ from $M_A^W$ \hfill\texttt{// }\textsf{vertices, i.e. extremal choices}\;
\ForEach{$c\in\corners(\bigC)$}{
$M_A^{W(c)}\gets M_A^W$ with values $W(c)$ plugged in\;
$\B_{new}\gets$columns of $M_A^{W(c)}$ linearly independent of columns of $\B$\;
$\B\gets(\B\ \B_{new})$\;
}
}
}
\BlankLine
\caption{Bisimulation on probabilistic automata}
\label{alg-fin}
\end{algorithm}

\begin{theorem}\label{thm:algorithm-finite}
Algorithm~\ref{alg-fin} computes a minimal bisimulation matrix.
\end{theorem}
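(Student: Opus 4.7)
The plan is to establish three claims in order: termination, correctness (the output is a bisimulation matrix), and minimality. Each will ultimately invoke Proposition~\ref{prop-alg-nondeterm}. Termination is the cheapest: every successful iteration of the inner \texttt{foreach} strictly increases the column rank of $\B$, and since columns live in $\Rset^{|\states|}$, the outer \texttt{repeat}-loop runs at most $|\states|$ times before $\B$ stabilises.

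For correctness, I would convert the exit condition of the loop directly into the hypothesis of Proposition~\ref{prop-alg-nondeterm}. When the loop exits, no new linearly independent column has been produced, which means that for every $A \subseteq \act$ and every $c \in \corners(\bigC)$ each column of $\PaWc \B$ is a linear combination of columns of $\B$; equivalently, there is a matrix $K_{A,c}$ with $\PaWc \B = \B K_{A,c}$. Then $\rho \B = 0$ implies $\rho \PaWc \B = \rho \B K_{A,c} = 0$, which is exactly $\PaWc$-stability. Since $\vec{1}$ is a column of $\B$ by initialisation, Proposition~\ref{prop-alg-nondeterm} yields that $\B$ is a bisimulation matrix.

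For minimality, I would argue by induction on the algorithm's steps that at every moment, every column of $\B$ lies in $\mathrm{col}(\B^{\star})$ for every bisimulation matrix $\B^{\star}$ that contains $\vec{1}$, and in particular for a minimal one $\B^{\star}_{\min}$ (one may always append $\vec{1}$ to a minimal bisimulation matrix without enlarging its column span, so this restriction is harmless). The base case $\B = (\vec{1})$ is immediate. For the inductive step, the ``only if'' direction of Proposition~\ref{prop-alg-nondeterm} tells us that $\B^{\star}$ is itself $\PaWc$-stable, which, by the same linear-algebraic duality, is equivalent to $\mathrm{col}(\PaWc \B^{\star}) \subseteq \mathrm{col}(\B^{\star})$. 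Hence if a current column $v$ of $\B$ satisfies $v = \B^{\star} k$ by the inductive hypothesis, then $\PaWc v = \PaWc \B^{\star} k \in \mathrm{col}(\B^{\star})$, so every newly added column still lies in $\mathrm{col}(\B^{\star}_{\min})$. Combined with correctness, the output $\B$ is a bisimulation matrix whose column space is contained in that of every bisimulation matrix, so by minimality of $\B^{\star}_{\min}$ the two column spaces coincide and $\B$ is itself minimal.

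The step I expect to be most delicate is the bridging equivalence between $\PaWc$-stability (a statement about left null spaces) and the column-space inclusion $\mathrm{col}(\PaWc \B) \subseteq \mathrm{col}(\B)$, which is invoked in both the correctness and the minimality arguments. It ultimately rests on the duality $\{\rho : \rho X = 0\} = \mathrm{col}(X)^{\perp}$ for a matrix $X$ with $|\states|$ rows; writing this out cleanly and quantifying uniformly over all pairs $(A, c)$ is the only real bookkeeping, after which everything reduces to a straightforward application of Proposition~\ref{prop-alg-nondeterm}.
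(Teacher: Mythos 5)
Your termination and correctness arguments are sound and follow the same route as the paper: the exit condition of the loop is precisely $P_A^{W(c)}$-stability of $\B$ (via the left-nullspace/column-space duality you describe), and Proposition~\ref{prop-alg-nondeterm} then certifies that the output is a bisimulation matrix. The gap is in the minimality induction, and it is not the duality you flag as delicate (that part is fine) but the appeal to the ``only if'' direction of Proposition~\ref{prop-alg-nondeterm} for $\B^{\star}$. The extremal tuples $\corners(C)$ in that proposition are defined relative to the matrix being tested: $C_i$ is the polytope of rows of $P_A^W\B$ as $W$ ranges over the choices, so the proposition only yields that $\B^{\star}$ is $P_A^{W(c)}$-stable for $c$ extremal with respect to $\B^{\star}$'s \emph{own} polytopes. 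The algorithm, however, adds columns of $P_A^{W(c)}\B$ for $c$ extremal with respect to the \emph{current, coarser} matrix $\B$. Writing $\B=\B^{\star}T$, each $C_i$ is the image of $C_i^{\star}$ under $v\mapsto vT$; a choice realising a vertex of the image polytope may realise only a relative-interior point of a face of $C_i^{\star}$, hence need not occur in $\corners(C^{\star})$, and Proposition~\ref{prop-alg-nondeterm} then says nothing about it. This is not a vacuous worry: the paper's example with $w_1^1\neq w_2^1$ shows that a bisimulation matrix genuinely fails to be $P_A^W$-stable for non-extremal $W$, so one cannot simply enlarge the set of choices for which stability of $\B^{\star}$ is claimed.

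The conclusion you want is nevertheless true, and the repair is to re-run the polytope argument from the proof of Proposition~\ref{prop-alg-nondeterm} at the level of the current $\B$ rather than of $\B^{\star}$: strengthen the induction hypothesis to ``every column of $\B$ is a necessary condition, i.e.\ $\dis\bisim\distwo$ implies $(\dis-\distwo)\B=0$''. Then for $\dis\bisim\distwo$ every Spoiler move $\dis P_A^W$ is matched by some $\distwo P_A^{W'}$ whose difference annihilates $\B$, so the polytopes $\{\dis P_A^W\B\mid W\}$ and $\{\distwo P_A^W\B\mid W\}$ coincide; since $W(c)$ realises the extremal point of either polytope in the common direction witnessing $c$, and that extremal point is unique, the two points are equal, giving $(\dis-\distwo)P_A^{W(c)}\B=0$ and hence that every newly added column is again a necessary condition. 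This is exactly the content of the paper's one-line justification (``the algorithm only adds columns required by stability on the current partitioning''), which is equally silent on the point --- so your proof is no less complete than the paper's, but as written the inductive step invokes a statement that Proposition~\ref{prop-alg-nondeterm} does not actually make.
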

%
%

The running time is exponential. We leave the question whether linear programming or other methods \cite{DBLP:conf/fsttcs/HermannsT12} can yield $\B$ in polynomial time open. The algorithm can easily be turned into one computing other bisimulation notions from the literature, for which there were no algorithms so far, see Section \ref{sec:rw}.

%

\subsection{Continuous-time systems - least fixpoints}
\label{sec:algorithms-infinite}

Turning our attention to continuous systems, we finally sketch an
algorithm for deciding bisimulation $\bisim$ over a subclass of
stochastic automata, this constitutes the first algorithm to compute
a bisimulation on the uncountably large semantical object.
%
%

%
We need to adopt two restrictions. 
%
First, we consider only \emph{deterministic} SA, where the probability that two edges  become enabled at the same time is zero (when initiated in any location). 
Second, to simplify the exposition, we restrict all distributions
occurring to exponential distributions. Notably, even for this class,
our bisimulation is strictly coarser than the one induced by standard
bisimulations~\cite{pepa,HHM,ymca} for continuous-time Markov
chains. At the end of the section we discuss possibilities for
extending the class of supported distributions.
%
%
%
Both the restrictions can be effectively checked on SA.
%

\begin{theorem}\label{thm:tableau}
Let $\SA = (\locations,\clocks,\actions,\edges,\clocksetting, \clockdist)$ be a deterministic SA over exponential distributions.
There is an algorithm to decide
in time polynomial in $|\SA|$ and exponential in $|\clocks|$ whether $\loc_1 \bisim \loc_2$ 
for any locations $\loc_1,\loc_2$.
\end{theorem}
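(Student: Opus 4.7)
The plan is to reduce the uncountable semantics $\pts_\SA$ to a finite symbolic object by exploiting two structural facts: the memoryless property of the exponential distributions and the determinism assumption. The key observation is that every distribution of interest has a very constrained shape. Whenever the automaton has just entered a location $\loc$ with some set $\Lab \subseteq \clocks$ of currently live (not-yet-expired) clocks, the joint distribution over $\states = \locations \times \Rset^\clocks$ is the product of a Dirac on $\loc$, an independent $\exponential(\rate(\clo))$ for each $\clo \in \Lab$, and a fixed marker on the inactive clocks. The set of such product distributions is finite: it is indexed by the symbolic state space $Q = \locations \times 2^{\clocks}$.

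First I would formalise the \emph{symbolic abstraction} $\abssemantics$, whose states are pairs $(\loc,\Lab) \in Q$ and whose single discrete step from $(\loc,\Lab)$ fires an enabled edge $(\loc,a,\{\clo\},\loc') \in \mathord{\edges}$ with probability $\rate(\clo)/\sum_{\clo' \in \Lab, \; \clo' \text{ guards some edge from } \loc} \rate(\clo')$ and moves to $(\loc', (\Lab \setminus \{\clo\}) \cup \clocksetting(\loc'))$; by determinism, guards are singletons up to a probability-zero event. I would then prove, by the memoryless property, a correspondence lemma: for any two symbolic states $(\loc_1,\Lab_1)$ and $(\loc_2,\Lab_2)$, the corresponding product distributions $\mea_1,\mea_2$ on $\states$ satisfy $\mea_1 \bisim \mea_2$ in $\pts_\SA$ iff $(\loc_1,\Lab_1)$ and $(\loc_2,\Lab_2)$ lie in the same class of an appropriately chosen equivalence on $\abssemantics$.

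The main obstacle, and where genuine work is needed, is bridging the continuous label space of $\pts_\SA$ (real durations in $\Rsetpo$) with the discrete character of $\abssemantics$. For a product distribution $\mea$ supported on $(\loc,\Lab)$ and a measurable $\Lab' \subseteq \labels$, both the mass $\mea(\states_{\Lab'})$ and the derivative $\mea \patransa{\Lab'} \mea'$ are governed by (i) the total outgoing rate $\Lambda_\Lab = \sum_{\clo \in \Lab} \rate(\clo)$, and (ii) the discrete mixture of successor product distributions with weights $\rate(\clo)/\Lambda_\Lab$. Using memorylessness one shows that, after conditioning on an arbitrary time interval $\Lab' \subseteq \Rsetpo$, the successor $\mea'$ is again a product of the same shape. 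Consequently, Definition~\ref{def:infinite-bisim} applied to such $\mea_1,\mea_2$ collapses to two finite conditions on $\abssemantics$: equal total rates $\Lambda_{\Lab_1}=\Lambda_{\Lab_2}$ and a bisimulation-style matching on the induced discrete successor distributions. The quantifier over all measurable label sets reduces to a single scalar constraint plus the discrete matching, which is what makes decidability possible.

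Once this reduction is in place, I would invoke an adaptation of Algorithm~\ref{alg-fin} on $\abssemantics$, viewed as a finite probabilistic automaton with $|Q| = |\locations| \cdot 2^{|\clocks|}$ states and at most $|\mathord{\edges}|$ outgoing discrete transitions per state, augmented by the rate bookkeeping. Computing the minimal bisimulation matrix on this object takes time polynomial in $|\abssemantics|$, hence polynomial in $|\SA|$ and exponential in $|\clocks|$, matching the claimed bound. Given $\loc_1,\loc_2$, the decision procedure then builds the two symbolic initial states $(\loc_i,\clocksetting(\loc_i))$ and checks whether they lie in the same class; by the correspondence lemma this is equivalent to $\mea_1 \bisim \mea_2$ in $\pts_\SA$, and hence by Definition~\ref{def:bisim-sta} to $\loc_1 \bisim \loc_2$.
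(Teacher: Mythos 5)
Your overall architecture matches the paper's first half: the paper likewise collapses $\pts_\SA$ to a finite symbolic object whose states are product-form distributions indexed by a location and a set of live clocks (its Lemma~\ref{lem:abs} and Lemma~\ref{lem:expo-finite}, giving the same $|\locations|\cdot 2^{|\clocks|}$ bound), and this part of your proposal is sound. Where you diverge is the decision procedure itself: you run the greatest-fixpoint Algorithm~\ref{alg-fin} on the finite abstraction, whereas the paper deliberately switches to a \emph{least-fixpoint} tableau that starts from the two initial distributions, follows the unique (deterministic) successors under rule \tableau{Step}, and terminates via linear dependence (\tableau{Lin}). That switch is not cosmetic, and it is tied to the one genuine gap in your argument.

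The gap is in your claim that the quantification over all measurable label sets ``reduces to a single scalar constraint $\Lambda_{\Lab_1}=\Lambda_{\Lab_2}$ plus the discrete matching.'' That is only true when $\mea_1,\mea_2$ are single product-form states. But the bisimulation game immediately produces \emph{mixtures} of symbolic states (see the paper's Example~\ref{ex:tableau}, where one must relate $\frac12\dis_\loctop+\frac12\dis_\locleft$ to $\dis_\locright$), and for a mixture the time-to-next-$a$ law is a mixture of exponentials with different total rates. Condition~1 of Definition~\ref{def:infinite-bisim}, instantiated with label sets $\{a\}\times[0,t]$ for all $t$, then forces equality of two mixtures of exponentials, which by linear independence of the functions $e^{-\Lambda t}$ is a system of constraints matching coefficients \emph{grouped by rate} --- exactly the paper's ``compatible timing'' requirement that $\mu$ and $\nu$ restricted to $\states_{\Lab_a}$ be equivalent as measures, not merely have equal means or equal masses. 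A finite PA carrying only the discrete branching probabilities plus one scalar per state loses this information, so your adapted Algorithm~\ref{alg-fin} could equate distributions that a Spoiler playing a time interval would separate. Relatedly, your ``correspondence lemma'' phrased as an equivalence on symbolic states cannot by itself characterise the relation, since the relation is a linear subspace of differences of mixtures; the paper handles both issues at once by checking compatible timing at every tableau node and bounding the tableau by the dimension $|\chainstates|$. Your complexity accounting would survive this repair, but the repair is the actual content of the paper's Lemmas~\ref{prop:correctness} and~\ref{lem:finite-tableau}.
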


The rest of the section deals with the proof.\holger{I can broadly
  follow what is happening here.} We fix $\SA = (\locations,\clocks,\actions,\edges,\clocksetting, \clockdist)$ and $q_1,q_2 \in \locations$.
%
%
First, we straightforwardly abstract the NLMP semantics $\pts_\SA$ by a NLMP $\abssemantics$
over state space $\absstates = \locations \times (\Rsetpo \cup
\{-\})^\clocks$ where all negative values of clocks are expressed by
one element $-$. Let $\xi$ denote the obvious mapping of distributions
$\dist(\SAstates)$ onto $\dist(\absstates)$. Then $\xi$
preserves bisimulation since two states $s_1,s_2$ that differ only in negative values satisfy $\xi(\kernel_a(s_1)) = \xi(\kernel_a(s_2))$ for all $a\in\labels$.

\begin{lemma}\label{lem:abs}
For any distributions $\dis, \distwo$ on $\SAstates$ we have $\dis \bisim \distwo$ iff $\xi(\dis) \bisim \xi(\distwo)$.
\end{lemma}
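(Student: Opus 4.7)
The plan is to show that $\xi$ is essentially a bisimulation-preserving quotient. Write $\pi : \states \to \absstates$ for the pointwise collapse map, so that $\xi = \pi_*$ is its pushforward on distributions. The abstract NLMP $\abssemantics$ is defined precisely so that its transition kernel satisfies $\hat\kernel_a(\pi(s)) = \xi(\kernel_a(s))$; the hint that $\xi(\kernel_a(s_1)) = \xi(\kernel_a(s_2))$ whenever $s_1, s_2$ differ only in negative clock values is exactly what makes this definition independent of the chosen preimage, hence well-posed. From this one lifts the correspondence to combined transitions and then to the distribution-lifted transition relation: I would prove the auxiliary statement that for every measurable $A \subseteq \labels$,
\[
\mu \patransa{A} \mu' \quad\text{in } \pts_\SA \iff \xi(\mu) \patransa{A} \xi(\mu') \quad\text{in } \abssemantics,
\]
and moreover every $\xi(\mu) \patransa{A} \hat\mu'$ in $\abssemantics$ can be realised as $\xi$ of some concrete transition $\mu \patransa{A} \mu'$. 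The forward direction is routine from the definition of $\patransa{A}$ as an integral over per-state successor choices; the backward direction uses a measurable selection to lift a family of abstract successor measures $\hat\nu_{\hat s}$ to concrete ones $\nu_s$ with $\xi(\nu_s) = \hat\nu_{\pi(s)}$, exploiting that the fibres $\pi^{-1}(\hat s)$ all have the same post-behaviour.

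For the $(\Rightarrow)$ direction of the lemma, suppose $R$ is a bisimulation on $\measures(\states)$ with $\mu \, R \, \nu$. Define
\[
\hat R = \{ (\xi(\mu_1), \xi(\mu_2)) : \mu_1 \, R \, \mu_2 \},
\]
symmetrised. I would verify the two bisimulation conditions for $\hat R$: the mass condition $\xi(\mu)(\absstates_A) = \mu(\states_A) = \nu(\states_A) = \xi(\nu)(\absstates_A)$ follows from the fact that enabledness of an $A$-labeled edge depends only on the sign of clocks, hence only on $\pi$; the transfer condition follows from the auxiliary statement above, by lifting any $\xi(\mu) \patransa{A} \hat\mu'$ to a concrete $\mu \patransa{A} \mu'$ with $\xi(\mu') = \hat\mu'$, using the bisimulation $R$ to find a match $\nu \patransa{A} \nu'$ with $\mu' \, R \, \nu'$, and projecting back through $\xi$.

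For the $(\Leftarrow)$ direction, given a bisimulation $\hat R$ on $\measures(\absstates)$ with $\xi(\mu) \, \hat R \, \xi(\nu)$, set
\[
R = \{ (\mu_1, \mu_2) \in \measures(\states)^2 : \xi(\mu_1) \, \hat R \, \xi(\mu_2) \}.
\]
To check the bisimulation conditions, for any $\mu \patransa{A} \mu'$ push down to get $\xi(\mu) \patransa{A} \xi(\mu')$, apply $\hat R$ to obtain $\xi(\nu) \patransa{A} \hat\nu'$ with $\xi(\mu') \, \hat R \, \hat\nu'$, and then use the lifting property to realise $\hat\nu'$ as $\xi(\nu')$ for some concrete $\nu \patransa{A} \nu'$. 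Then $\mu' \, R \, \nu'$ by construction. The mass condition is automatic since $\mu(\states_A) = \xi(\mu)(\absstates_A)$.

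The main obstacle is the measurability/lifting argument needed to realise every abstract combined transition $\xi(\mu) \patransa{A} \hat\mu'$ as the $\xi$-image of a concrete combined transition. This requires a measurable selection across the fibres $\pi^{-1}(\hat s)$ together with the Fubini-style decomposition implicit in the definition of $\patransa{A}$ on distributions; once this is in place, both directions of the equivalence drop out cleanly.
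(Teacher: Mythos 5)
Your proof is correct in outline, but it takes a genuinely different and heavier route than the paper's. The paper leans entirely on the standing assumption of this subsection that $\SA$ is \emph{deterministic}: for each $\dis$ and each $\Lab$ there is a \emph{unique} $\dis'$ with $\dis \patransa{\Lab} \dis'$, so $\xi$ and $\patransa{\Lab}$ literally commute as partial functions ($\dis(\states_\Lab)=\xi(\dis)(\absstates_\Lab)$ and $\xi(\dis)\patransa{\Lab}\xi(\dis')$ is \emph{the} abstract transition). The forward direction is then just the image of the maximal bisimulation under $\xi$, and the backward direction is argued contrapositively: non-bisimilarity of deterministic systems is witnessed by a finite sequence $\Lab_1,\dots,\Lab_n$ of label sets leading to a violated condition, and commutation transports that witness to $\abssemantics$. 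You instead prove the general zig-zag/lifting statement --- every abstract transition $\xi(\mea)\patransa{\Lab}\hat\mea'$ is realised as $\xi$ of a concrete one --- and transport bisimulations both ways via $\hat R=(\xi\times\xi)(R)$ and $R=(\xi\times\xi)^{-1}(\hat R)$. This buys you a proof that would survive non-determinism and avoids the ``finite distinguishing sequence'' characterisation (which itself is only valid because of determinism), at the price of the measurable-selection step you flag as the main obstacle. Note, however, that in the actual setting of the lemma that obstacle is vacuous: determinism makes the successor distribution unique, so there is nothing to select and your lifting statement reduces exactly to the paper's commutation claim. If you intend the argument to cover the non-deterministic case, the selection lemma must actually be proved (it is the only non-routine content); as written, your proposal asserts it rather than establishes it.
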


Second, similarly to an embedded Markov chain of a CTMC, we further abstract the NLMP $\abssemantics$ by a \emph{finite} deterministic PA
$\chain = (\chainstates, \actions, \patrans)$ such that each state of $\chain$ is a distribution over the uncountable state space $\absstates$.
\begin{itemize}
\item The set $\chainstates$ is the set of states reachable via the transitions relation defined below from the distributions $\dis_1, \dis_2$ corresponding to $\loc_1$, $\loc_2$ (see Definition~\ref{def:bisim-sta}).
\item Let us fix a state $\dis \in \chainstates$ (note that $\dis\in\measures(\absstates)$) and an action $a\in\actions$ such that in the NLMP $\abssemantics$ an $a$-transition occurs with positive probability, i.e. $\dis \patransa{\Lab_a} \distwo$ for some $\distwo$ and for $\Lab_a = \{a\}\times \Rsetpo$. Thanks to restricting to deterministic SA, $\abssemantics$ is also deterministic and such a distribution $\distwo$ is uniquely defined. 
We set $(\dis,a,M) \in \; \patrans$ where $M$ is the discrete distribution that assigns probability $p_{\loc,f}$ to state $\distwo_{\loc,f}$ for each $\loc\in\locations$ and $f: \clocks \to \{-,+\}$ where $p_{\loc,f} = \distwo(\absstates_{\loc,f})$, $\distwo_{\loc,f}$ is the conditional distribution $\distwo_\loc(X) := \distwo(X \cap \absstates_{\loc,f})/\distwo(\absstates_{\loc,f})$ for any measurable $X \subseteq \absstates$, and $\absstates_{\loc,f} = \{(\loc',v)\in\absstates \mid \loc' = \loc, \text{$v(\clo) \geq 0$ iff $f(\clo) = +$ for each $\clo\in\clocks$}\}$ the set of states with location $\loc$ and where the sign of clock values matches $f$.
\end{itemize}
For exponential distributions all the reachable states $\nu \in \chainstates$ correspond to some location $\loc$ where the subset $\Clo\subseteq \clocks$ is newly sampled, hence we obtain:

\begin{lemma}\label{lem:expo-finite}
For a deterministic SA over exponential distributions, $|\chainstates| \leq |\locations|2^{|\clocks|}$.
\end{lemma}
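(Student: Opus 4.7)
The plan is to establish, by induction on the length of a minimal path in $\chain$ from $\dis_1$ or $\dis_2$, that every reachable $\nu \in \chainstates$ is fully determined by a pair $(\loc, f) \in \locations \times \{-,+\}^\clocks$: namely, $\nu$ is the product measure placing a Dirac at $\loc$, a Dirac at $-$ on each clock $\clo$ with $f(\clo) = -$, and an independent copy of $\clockdist(\clo) = \mathrm{Exp}(\lambda_\clo)$ on each clock $\clo$ with $f(\clo) = +$. Since the number of such pairs is at most $|\locations| \cdot 2^{|\clocks|}$, the bound on $|\chainstates|$ follows immediately.

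The base case is direct from Definition~\ref{def:bisim-sta}: the initial $\dis_i$ is the product of $\delta_{\loc_i}$, an independent $\mathrm{Exp}(\lambda_\clo)$ on each $\clo \in \clocksetting(\loc_i)$, and Dirac at $0$ on the remaining clocks. The latter can be treated as the boundary case of the $-$ sign, since any edge enabled by such a zero-valued clock fires in zero time, and on every branch with a strictly positive firing time those clocks enter $\{-\}$ while the product form of the other coordinates is unaffected.

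For the inductive step, fix a canonical $\nu$ with pattern $(\loc, f)$ and an action $a$ with $\nu \patransa{\{a\}\times\Rsetpo} \distwo$ in $\abssemantics$. By determinism of $\SA$, with probability one exactly one edge $e = (\loc, a, \Clo, \loc')$ fires at the minimum firing time $T$. Analyse $\distwo_{\loc',f'}$ clock by clock: clocks in $\Clo$ are reset below $0$ and hence carry Dirac at $-$; for each $\clo \notin \Clo$ with $f(\clo) = +$, the memoryless property of $\mathrm{Exp}(\lambda_\clo)$ yields that conditional on the residual being strictly positive at time $T$, its value is again independently $\mathrm{Exp}(\lambda_\clo)$, while conditional on having expired it contributes Dirac at $-$; clocks with $f(\clo) = -$ stay at $-$; and finally each $\clo \in \clocksetting(\loc')$ is resampled independently from $\mathrm{Exp}(\lambda_\clo)$, overriding any previous status. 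Restricting to a fixed post-transition sign pattern $f'$, the resulting $\distwo_{\loc',f'}$ is exactly the canonical product measure determined by $(\loc', f')$, completing the induction.

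The main obstacle is the conditional computation in the inductive step: after conditioning on a particular firing edge and on the post-transition sign pattern $f'$, one must show independence and the exponential form of all surviving clocks. This is precisely where memorylessness is essential — without it, the residual distribution of a surviving clock would depend on $T$, and the number of reachable conditional distributions in $\chainstates$ could be uncountable. Memorylessness collapses the continuum of firing times into a finite discrete choice of sign pattern, which yields the stated bound $|\chainstates| \leq |\locations| \cdot 2^{|\clocks|}$.
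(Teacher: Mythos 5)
Your proof is correct and follows essentially the same route as the paper: both arguments show that every reachable state of $\chain$ has the canonical form of a location paired with a sign pattern on clocks, with independent $\mathrm{Exp}(\lambda_\clo)$ residuals on the still-positive clocks, the key ingredient being memorylessness (the paper phrases it as $P[X-Y>t\mid X>Y]$ remaining exponential with rate $\lambda$). The only cosmetic difference is that where you condition directly on the firing time for an edge whose trigger set contains several still-positive clocks, the paper instead splits that transition into a ``diamond'' of intermediate states, one per expiring clock; both yield the same bound $|\locations|2^{|\clocks|}$.
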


%
%
%
%
%
Instead of a greatest fixpoint computation as employed for the discrete
algorithm
, we take a complementary approach and prove or disprove
bisimilarity by a least fixpoint procedure.
We start with the initial pair of distributions (states in $\chain$)
which generates further requirements that we impose on the relation and try to satisfy them. 
We work with a \emph{tableau}, a rooted tree where each node is either an \emph{inner node} with a pair of discrete probability distributions over states of $\chain$ as a label, a \emph{repeated node} with a label that already appears somewhere between the node and the root, or a \emph{failure node} denoted by $\fail$, and the children of each inner node are obtained by one \emph{rule} from $\{\tableau{Step}, \tableau{Lin} \}$. A tableau not containing $\fail$ is \emph{successful}.
%
\begin{description}
 \item[\tableau{Step}] For a node $\mu \sim \nu$ where $\mu$ and $\nu$ have \emph{compatible timing}, we add for each label $\lab\in\labels$ one child node $\mu_a \sim \nu_a$ where $\mu_a$ and $\nu_a$ are the unique distributions such that $\mu \patransa{a} \mu_a$ and $\nu \patransa{a} \nu_a$. Otherwise, we add one failure node. 
 We say that $\mu$ and $\nu$ have compatible timing if for all actions $a\in\actions$ we have $\mu(\states_{\Lab_a}) = \nu(\states_{\Lab_a})$ and 
 if for all actions $a\in\actions$ with $\mu(\states_{\Lab_a}) >0$ we
 have that  
 $\mu$ restricted to $\states_{\Lab_a}$ is equivalent to $\nu$ restricted to $\states_{\Lab_a}$.
\item[\tableau{Lin}] For a node $\mu \bisim \nu$ linearly dependent on the set of 
remaining nodes in the tableau, we add one child (repeat) node $\mu \bisim \nu$. Here, we understand each node $\mu \bisim \nu$ as a vector $\mu - \nu$ in the $|\states_\SA|$-dimensional vector space. 
\end{description} 

%
\noindent
Note that compatibility of timing is easy to check. Furthermore, the set of rules is correct and complete w.r.t. bisimulation in $\abssemantics$.
\begin{lemma}\label{prop:correctness}
There is a successful tableau from $\dis \bisim \distwo$ iff $\dis \bisim \distwo$ in $\abssemantics$.
Moreover, the set of nodes of a successful tableau is a subset of a bisimulation. 
\end{lemma}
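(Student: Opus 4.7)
The plan is to prove the ``moreover'' clause first, which automatically yields the implication ``successful tableau $\Rightarrow$ $\dis\bisim\distwo$'' (the root is a node of the tableau), and then handle the reverse implication by constructing a successful tableau from a given bisimulation.

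\emph{Successful tableau yields a bisimulation.} Given a successful tableau $T$ with set of node-labels $N$, I would define
\[
R \;=\; \bigl\{(\mu,\nu)\in\measures(\chainstates)^2 \bigm| \mu-\nu\in\mathrm{span}_{\Rset}\{\mu'-\nu' : (\mu'\bisim\nu')\in N\}\bigr\}
\]
and verify that (the image of) $R$ under the canonical embedding into $\measures(\absstates)^2$ is a bisimulation on $\abssemantics$. Condition~(1) of Definition~\ref{def:infinite-bisim} on each slice $\Lab_a$ follows from the compatible-timing check performed by \tableau{Step} (which succeeded, since no $\fail$ was produced), and extends to $R$ by the linearity of the mass functional $\mu\mapsto\mu(\states_{\Lab_a})$. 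Since any measurable $\Lab\subseteq\labels$ decomposes as the disjoint union $\bigsqcup_{a\in\actions}(\Lab\cap\Lab_a)$, and compatible timing also equates $\mu$ and $\nu$ restricted to each $\states_{\Lab_a}$, condition~(1) follows on all measurable $\Lab$. For condition~(2), the \tableau{Step} rule places each child $\mu_a\bisim\nu_a$ into $N\subseteq R$; determinism of $\abssemantics$ (inherited from the deterministic $\SA$) then makes the transition of a linear combination a linear combination of the transitions within each slice, so the matching extends to $R$ and to all measurable $\Lab$.

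\emph{Bisimilar pair yields a successful tableau.} Starting from $\dis\bisim\distwo$, I would build the tableau greedily top-down: at each node apply \tableau{Lin} if its label is linearly dependent on the remaining nodes of the tableau so far, and otherwise apply \tableau{Step}. The invariant to maintain is that every inner-node label is itself a bisimilar pair in $\abssemantics$. At the root this holds by assumption, and \tableau{Step} preserves it by Definition~\ref{def:infinite-bisim}~(2) applied to each label set $\Lab_a$; by Definition~\ref{def:infinite-bisim}~(1) applied to the same $\Lab_a$, compatible timing is then automatic and no $\fail$ node is ever produced. Termination uses Lemma~\ref{lem:expo-finite}: the vectors $\mu-\nu$ lie in a space of dimension at most $|\chainstates|\le|\locations|\,2^{|\clocks|}$, so on every root-to-leaf path only finitely many linearly independent labels can accumulate before \tableau{Lin} must fire, producing a repeat.

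\emph{Main obstacle.} The hardest point is bridging the mismatch between the tableau's per-action \tableau{Step} rule and the bisimulation's quantification over \emph{all} measurable $\Lab\subseteq\labels=\actions\times\Rsetpo$. The deterministic-SA assumption is precisely what lets this gap be closed: it guarantees that the reaction to a measurable $\Lab$ is the disjoint-action sum of the reactions to the per-action slices $\Lab\cap\Lab_a$, so that matching per-action \tableau{Step} together with compatible timing entails matching on every measurable $\Lab$. A secondary care point is that the linear closure is taken of \emph{differences} of measures, so one must check that only linear functionals (masses, restrictions) appear in the conditions being verified, and that these transfer cleanly along the linear span even though $R$ itself consists only of genuine probability distributions.
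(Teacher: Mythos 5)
Your proposal is correct and reaches the same conclusions, but it organizes the harder direction differently from the paper. For ``successful tableau $\Rightarrow$ bisimilar'', the paper first \emph{unfolds} the tableau: it replaces every application of \tableau{Lin} by further applications of \tableau{Step} (possible because the step-successor of a linearly dependent node is the correspondingly reweighted combination of the step-successors of the nodes it depends on), obtaining an infinite \tableau{Step}-only tableau whose node set is then itself a bisimulation containing all nodes of the original tableau. You instead take the bisimulation to be the linear-span closure $R$ of the node differences directly and verify Definition~\ref{def:infinite-bisim} on $R$ via linearity of $\mu\mapsto\mu(\states_{\Lab})$ and of the unnormalised transition operator. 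The two arguments rest on exactly the same linearity fact; yours exhibits the witnessing relation in the algebraic form the algorithm actually manipulates, while the paper's unfolding keeps the argument operational and makes the ``Moreover'' clause immediate. One point you should make explicit (the paper also only gestures at it): when verifying condition~(2) for a pair of $R$ whose decomposition involves \tableau{Lin} or repeat nodes, the step-successors of those nodes are not themselves tableau nodes, so you need the well-founded observation that every node difference already lies in the span of the differences of the \tableau{Step}-nodes, whose children do belong to $N$; only then does the transition operator map the span into itself. For the converse direction you build a finite tableau and argue termination by the dimension bound, whereas the paper simply builds the infinite \tableau{Step}-only tableau (the definition of tableau does not require finiteness) and defers finiteness to Lemma~\ref{lem:finite-tableau}; your construction therefore proves slightly more than is needed at this point, but is not wrong.
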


\noindent
We get Theorem~\ref{thm:tableau} since $q_1 \bisim q_2$ iff $\xi(\dis_1) \bisim \xi(\dis_2)$ in $\abssemantics$ and since, thanks to \tableau{Lin}:

\begin{lemma}\label{lem:finite-tableau}
There is a successful tableau from $\dis \bisim \distwo$ iff there is a finite successful tableau from $\dis \bisim \distwo$ of size polynomial in $|\chainstates|$.
\end{lemma}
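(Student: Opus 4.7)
The plan is to establish the two directions of the iff. The backward direction is immediate, since any finite successful tableau of polynomial size is in particular a successful tableau. For the forward direction, I will transform an arbitrary successful tableau $T$ from $\dis \bisim \distwo$ into a polynomial-size successful tableau $T'$ by applying the \tableau{Lin} rule as eagerly as possible.

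The key observation is that every tableau node $\mu \bisim \nu$, with $\mu,\nu \in \dist(\chainstates)$, can be viewed as the difference vector $\mu - \nu$ in the finite-dimensional space $\Rset^{|\chainstates|}$. I would build $T'$ in breadth-first order, maintaining a set $B$ collecting the difference vectors of the nodes already expanded via \tableau{Step}. For each newly encountered pair $\mu' \bisim \nu'$, I check whether $\mu' - \nu'$ lies in the linear span of $B$: if yes, apply \tableau{Lin} (producing a single repeat leaf, closing the branch); otherwise, apply \tableau{Step} (producing one child per action in $\actions$) and extend $B$ with $\mu' - \nu'$. Because $\chain$ is deterministic, the \tableau{Step} successors are uniquely determined, so the pairs appearing in $T'$ are a subset of those appearing in $T$.

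I then have to verify that $T'$ contains no $\fail$ node. By Lemma~\ref{prop:correctness}, every node of $T$ is a bisimilar pair; hence so is every node of $T'$. Compatibility of timing for a bisimilar pair follows from clause~1 of Definition~\ref{def:infinite-bisim} applied to $\Lab = \Lab_a$, which yields $\mu(\states_{\Lab_a}) = \nu(\states_{\Lab_a})$, together with clause~2 applied to the deterministic $\patransa{\Lab_a}$-successor, which forces the renormalized restrictions of $\mu$ and $\nu$ to $\states_{\Lab_a}$ to themselves be bisimilar, hence equivalent in the sense required by \tableau{Step}. The \tableau{Lin} applications are correct by the construction of $B$.

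For the size bound, the vectors added to $B$ are linearly independent by construction, so at most $|\chainstates|$ nodes of $T'$ are expanded via \tableau{Step}. Each such node spawns $|\actions|$ children, and each \tableau{Lin} node adds a single repeat leaf, giving total size $O(|\chainstates|\cdot|\actions|)$, polynomial in the size of the input. The main obstacle I anticipate is the technical step of relating bisimilarity of $\mu$ and $\nu$ in the NLMP $\abssemantics$ to the precise "compatible timing" condition used by \tableau{Step}, in particular the equivalence of the restricted distributions; this likely requires a careful translation between the $\Lab_a$-labelled transitions of $\abssemantics$ and the action-indexed transitions of the abstracted chain $\chain$, in the spirit of the embedded-Markov-chain construction that defines $\chain$.
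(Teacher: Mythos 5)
Your proof takes essentially the same route as the paper's: view each node as a difference vector in $\Rset^{|\chainstates|}$, apply \tableau{Lin} eagerly in a fixed (breadth-first) order so that at most $|\chainstates|$ nodes are ever expanded by \tableau{Step}, and rely on Lemma~\ref{prop:correctness} to rule out failure nodes. The one imprecision is your claim that the pairs of $T'$ form a subset of those of $T$ --- the original tableau may apply \tableau{Lin} (using linear dependence on its \emph{entire} node set) at nodes where your incrementally built span does not yet suffice, so $T'$ can contain new pairs --- but this is harmless, since bisimilarity of the root (from Lemma~\ref{prop:correctness}) is preserved by the deterministic \tableau{Step} successors, which is all the argument needs.
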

%

%

%
%

\begin{example}\label{ex:tableau}
Let us demonstrate the rules by a simple example.
Consider the following stochastic automaton $\SA$ on the left.
\vspace{-0.7em}
\begin{center}
\begin{tikzpicture}[x=2.5cm,y=1.2cm,outer sep=1mm,
state/.style={draw,circle, inner sep =0.4em, text centered},
trans/.style={font=\scriptsize},
prob/.style={font=\scriptsize}
]

\begin{scope}
\node[state] (s) at (0,0) {$\loctop$};
\node[state] (t) at (0.8,0) {$\locleft$};
\node[state] (u) at (1.7,0) {$\locright$};
\node[above=0,prob] at (s.north) {$x:=\mathrm{Exp}(1/2)$}; 
\node[above=-8,prob] at (s.north) {$y:=\mathrm{Exp}(1/2)$}; 
\node[above=-8,prob] at (t.north) {$x:=\mathrm{Exp}(1)$}; 
\node[above=-8,prob] at (u.north) {$x:=\mathrm{Exp}(1)$};

\path[->] (s) edge[loop left,in=205,out=165,looseness=5] 
node[prob,below=-2,pos=0.7]{$x = 0$} node[prob,below=-3,pos=0.1]{$a$} (s);
\path[->] (s) edge[] node[prob,above=-4] {$a$} node[prob,below=-4] {$y=0$} (t);
\path[->] (t) edge[loop right,in=25,out=345,looseness=5] 
node[prob,below=-2,pos=0.3]{$x = 0$} node[prob,above=-3,pos=0.1]{$a$} (t);
\path[->] (u) edge[loop right,in=25,out=345,looseness=5] 
node[prob,below=-2,pos=0.3]{$x = 0$} node[prob,above=-3,pos=0.1]{$a$} (u);

\draw[dotted,thick] (2.4,0.5) -- (2.4,-0.3);
\end{scope}

\begin{scope}[xshift=200]
\node[state,inner sep =0.2em] (s) at (0,0) {$\dis_q$};
\node[state,inner sep =0.2em] (t) at (0.75,0) {$\dis_u$};
\node[state,inner sep =0.2em] (u) at (1.25,0) {$\dis_v$};
\node[above left=-8,prob] at (s.north west) {}; \node[above right=-8,prob] at (t.north) {}; \node[above right=-8,prob] at (u.north) {};

\path[->] (s) edge[loop,in=60,out=0,looseness=5] node[above left=-4,pos=0.1,prob]{$a$} node[pos=0.2,name=x,inner sep=0,outer sep=0]{} node[above=-3,pos=0.8,prob] {$0.5$} (s);
\path[->] (x) edge[bend left] node[prob,above=-4] {$0.5$} (t);
\path[->] (t) edge[loop above,looseness=5] 
node[prob,right=-4,pos=0.1]{$a$} (t);
\path[->] (u) edge[loop above,looseness=5] 
node[prob,right=-4,pos=0.1]{$a$} (u);
\end{scope}

%
%
\end{tikzpicture}
\end{center}
\vspace{-0.7em}
Thanks to the exponential distributions, $\chain$ on the right has also only three states where $\dis_q = q \otimes Exp(1/2) \otimes Exp(1/2)$ is the product of two exponential distributions with rate $1/2$, $\dis_u = u\otimes Exp(1)$, and $\dis_v = v \otimes Exp(1)$. Note that for both clocks $x$ and $y$, the probability of getting to zero first is $0.5$. 

%

\vspace{0.2em}
\begin{center}
\begin{tikzpicture}[thin,scale=0.8, every node/.style={scale=0.8}]
\begin{scope}[xshift=0em,yshift=-2em]
 \node (a1) at (0,0) {$
1 \cdot \dis_\locleft
\; \sim \;
1 \cdot \dis_\locright
$};
\draw (a1.south east) -- (a1.south west);
\node [label,right] at (a1.south east) {\tableau{Step}};

 \node (a2) at (0,-0.7) {$
1 \cdot \dis_\locleft
\; \sim \;
1 \cdot \dis_\locright
$};
\end{scope}

\begin{scope}[xshift=20em]
 \node (a1) at (0,0) {$
1 \cdot \dis_\loctop + 0 \cdot \dis_\locleft
\; \sim \;
1 \cdot \dis_\locright
$};

 \node (a2) at (0,-0.7) {$
\frac{1}{2} \cdot \dis_\loctop + \frac{1}{2} \cdot \dis_\locleft
\; \sim \;
1 \cdot \dis_\locright
$};
\node (a3) at (0,-1.4) {$
\frac{1}{4} \cdot \dis_\loctop + \frac{3}{4} \cdot \dis_\locleft
\; \sim \;
1 \cdot \dis_\locright
$};
\node (a4) at (0,-2.1) {$\cdots
$};

\draw (a1.south east) -- (a1.south west);
\node [label,right] at (a1.south east) {\tableau{Step}};
\draw (a2.south east) -- (a2.south west);
\node [label,right] at (a2.south east) {\tableau{Step}};
\draw (a3.south east) -- (a3.south west);
\node [label,right] at (a3.south east) {\tableau{Step}};

\end{scope}
\end{tikzpicture}
\end{center}
\vspace{-1em}
\noindent
The finite tableau on the left is successful since it ends in a
repeated node, thus it proves $\locleft \sim \locright$. The infinite tableau on the right is also successful and proves $\loctop\sim \locright$. When using only the rule \tableau{Step}, it is necessarily infinite as no node ever repeats.
The rule \tableau{Lin} provides the means to truncate such infinite sequences.
%
%
Observe that the third node in the tableau on the right above is
linearly dependent on its ancestors.
\end{example}
\begin{remark}
  Our approach can be turned into a complete proof system for
  bisimulation on models with \emph{expolynomial}
  distributions~\footnote{With density that is positive on an interval
    $[\ell,u)$ for $\ell \in \Nseto$, $u \in \Nset \cup \{\infty\}$
    given piecewise by expressions of the form $\sum_{i=0}^I
    \sum_{j=0}^J a_{ij} x^i e^{-\lambda_{ij}x}$ for
    $a_{ij},\lambda_{ij} \in \Rset \cup \{\infty\}$. This class
    contains many important distributions such as exponential, or
    uniform, and enables efficient approximation of others.}. Thanks
  to their properties, the states of the discrete transition system
  $\chain$ can be expressed symbolically. In fact, we conjecture that
  the resulting semi-algorithm can be twisted to a decision algorithm
  for this expressive class of models.
Being technically demanding, it is out of scope of this paper.
\end{remark}

\section{Related work and discussion}\label{sec:rw}
For an overview of coalgebraic work on probabilistic bisimulations we refer to a survey \cite{DBLP:journals/tcs/Sokolova11}. A considerable effort has been spent to extend this work to continuous-space systems: the solution of \cite{DBLP:conf/icalp/VinkR97} (unfortunately not applicable to $\Rset$), the construction of \cite{DBLP:journals/mscs/Edalat99} (described by \cite{Sangiorgi:2011:ATB:2103601} as ``ingenious and intricate''), sophisticated measurable selection techniques in \cite{DBLP:conf/icalp/Doberkat03}, and further approaches of \cite{DBLP:conf/lics/DesharnaisGJP00} or \cite{phd-cont}. In contrast to this standard setting where relations between states and their successor distributions must be handled, our work uses directly relations on distributions 
which simplifies the setting. The coalgebraic approach has also been applied to trace semantics of uncountable systems \cite{DBLP:conf/concur/KerstanK12}. Coalgebraic treatment of probabilistic bisimulation is still very lively \cite{DBLP:conf/fossacs/Mio14}.

Recently, distribution-based bisimulations have been studied. In \cite{DBLP:journals/ijfcs/DoyenHR08}, a bisimulation is defined in the context of language equivalence of Rabin's deterministic probabilistic automata and also an algorithm to compute the bisimulation on them. However, only finite systems with no non-determinism are considered. The most related to our notion are the very recent independently developed \cite{DBLP:journals/corr/FengZ13} and \cite{DBLP:journals/corr/abs-1202-4116}. However, none of them is applicable in the continuous setting and for neither of the two any algorithm has previously been given. Nevertheless, since they are close to our definition, 
our algorithm with only small changes can actually compute them. 
Although the bisimulation of \cite{DBLP:journals/corr/FengZ13} in a rather complex way extends \cite{DBLP:journals/ijfcs/DoyenHR08} to the non-deterministic case 
reusing 
their notions, it can be equivalently rephrased as our
Definition~\ref{def:infinite-bisim} only considering singleton sets
$A\subseteq\act$. Therefore, it is sufficient to only consider
matrices $\PaW$ for singletons $A$ in our algorithm. Apart from being
a weak relation\holger{note this}, the bisimulation  of
\cite{DBLP:journals/corr/abs-1202-4116} differs in the definition of
$\dis\tran{A}\distwo$: instead of restricting to the states of the
support that can perform \emph{some} action of $A$, it considers those
states that can perform \emph{exactly} actions of $A$. Here each $i$th
row of each transition matrix $\PaW$ needs to be set to zero if the set of labels from $s_i$ is different from $A$.

There are also bisimulation relations over distributions that, however, coincide with  the classical \cite{DBLP:conf/popl/LarsenS89} on Dirac distributions and are only directly lifted to non-Dirac distributions.  
Thus they fail to address the motivating correspondence problem from Section~\ref{sec:intro} and are less precise for large-population models. Moreover, no
algorithms were given. 
They were considered for finite \cite{DBLP:conf/concur/CrafaR11,DBLP:dblp_journals/fac/Hennessy12} and uncountable \cite{cattani-thesis} state spaces.

There are other bisimulations that identify more states than the classical \cite{DBLP:conf/popl/LarsenS89} such as \cite{DBLP:conf/concur/SongZG11} and \cite{italie} designed to match a specific logic. Further, weak bisimulations coarser than usual state based analogues were given in \cite{DBLP:conf/lics/EisentrautHZ10,DBLP:conf/qest/EisentrautHKT013,DBLP:journals/iandc/DengH13}, which also inspires our work, especially their approach to internal transitions. However, they are quite different from our notion as in the case without internal transitions they basically coincide with lifting
\cite{DBLP:dblp_journals/fac/Hennessy12} of the classical bisimulation \cite{DBLP:conf/popl/LarsenS89}. Another approach to obtain coarser equivalences on probabilistic automata is via testing scenarios~\cite{DBLP:conf/icalp/StoelingaV03}.


\section{Conclusion}

We have introduced a general and natural notion of a distribution-based probabilistic bisimulation, shown its applications in different settings and given algorithms to compute it for finite and some classes of infinite systems. As to future work, the precise complexity of the finite case is certainly of interest. Further, the tableaux decision method opens the arena for investigating wider classes of continuous-time systems where the new bisimulation is decidable.

\bibliographystyle{myabbrv}
\bibliography{main}

\appendix
\newpage
\newcommand{\lijunbisim}{\bisim_{\functorfin}}

\section{Bisimulation coalgebraically}

\subsection{Short introduction to coalgebras}

Definitions of bisimulations can be given in terms of relations and we did so. However, for two reasons we also give a coalgebraic definition that induces our relational definition. Firstly, due to the general framework our definition will cover a spectrum of bisimulations depending on the interpretation of the coalgebra and is applicable to more complex systems, automatically yielding the bisimulation definitions. Secondly, any ad-hoc features of a simple coalgebraic definition are more visible and can be clearly identified, whereas it is difficult to distinguish which of two similar relational definitions is more natural. As we assume no previous knowledge of categorical notions we give a brief introduction to coalgebras in the spirit of \cite{Sangiorgi:2011:ATB:2103601}.

A \emph{functor} $F$ (on sets) assigns to each set $X$ a set $F(X)$, and to each set function $f:X\to Y$ a set function $F(f):F(X)\to F(Y)$ such that two natural conditions are satisfied: (i) the identity function on $X$ is mapped to the identity function on $F(X)$ and (ii) a composition $f\circ g$ is mapped to a composition $F(f)\circ F(g)$.

\begin{example}
The powerset functor $\powerset(-)$ maps a set $X$ to the set $\powerset(X)$ of its subsets and a function $f:X\to Y$ to $\powerset(f):\powerset(X)\to\powerset(Y)$ by $U\mapsto \{f(x)\mid x\in U\}$. 

Similarly, for a fixed set $\act$, the operator $(-)^\act$ mapping $X$ to the set $X^\act$ of functions $\act\to X$ is a functor, where the image of $f:X\to Y$ is $F(f):X^\act\to Y^\act$ given by mapping $u:\act\to X$ to $f\circ u:\act\to Y$.
\end{example}

For a functor $F$, an \emph{$F$-coalgebra} is a pair of the carrier set (or state space) $S$ and the operation function $\nxt:S\to F(S)$. Intuitively, the function $\nxt$ describes the transition to the next step.

\begin{example}
A transition system $(S,\rightarrow)$ with $\mathord{\rightarrow}\subseteq S\times S$ can be understood as a $\mathcal P(-)$-coalgebra by setting $\nxt(s)=\{s'\mid s\tran{} s'\}$. And vice versa, every $\mathcal P$-coalgebra gives rise to a transition system.

A labelled transition system $(S,\act,\rightarrow)$ with the set of labels $\act$ and $\mathord{\rightarrow}\subseteq S\times \act\times S$ can be seen as a $(\powerset(-))^\act$-coalgebra with $\nxt:S\to (\powerset(S))^\act$ given by $\nxt(s)(a) = \{s'\mid s\tran{a}s'\}$.
\end{example}

A \emph{bisimulation} on an $F$-coalgebra $(S,\nxt)$ is a an $F$-coalgebra $(R,\overline\nxt)$ with $R\subseteq S\times S$ such that the two projections $\pi_1:R\to S$ and $\pi_2:R\to S$ make the following diagram commute:\footnote{I.e.\ $\nxt\circ\pi_1=F(\pi_1)\circ\overline\nxt$ and $\nxt\circ\pi_2=F(\pi_2)\circ\overline\nxt$.}

$$
\begin{tikzpicture}[x=2.5cm,y=1.2cm]
\node (S1) at (-1,0) {$S$};
\node (FS1) at (-1,-1) {$F(S)$};
\node (S2) at (1,0) {$S$};
\node (FS2) at (1,-1) {$F(S)$};
\node (R) at (0,0) {$R$};
\node (FR) at (0,-1) {$F(R)$};
\path[->] (S1) edge node[left] {$\nxt$} (FS1)
	  (S2) edge node[right] {$\nxt$} (FS2)
	  (R) edge node[left] {${\overline\nxt}$} (FR);
\path[->] (R) edge node[above] {$\pi_1$} (S1)
	  (R) edge node[above] {$\pi_2$} (S2)
	  (FR) edge node[below] {$F(\pi_1)$} (FS1)
	  (FR) edge node[below] {$F(\pi_2)$} (FS2);
\end{tikzpicture}$$

\begin{example}
For LTS, the coalgebraic bisimulation coincides with the classical one of Park and Milner \cite{DBLP:books/daglib/0067019}, where a symmetric $R$ is a bisimulation if for every $sRt$ and $s\tran{a}s'$ there is $t\tran{a}t'_{s,a,s',t}$ with $s'Rt'_{s,a,s',t'}$. Indeed, given a classical bisimulation $R$, one can define $\nxt(\langle s, t\rangle)(a)$ 
to contain for every $s\tran{a}s'$ the matching pair $\langle s',t'_{s,a,s',t}\rangle$ and symmetrically for $t$. Since all these pairs are from $R$, $(R,\overline\nxt)$ is indeed a coalgebra. Further, the projection $F(\pi_1)$ of $\nxt(\langle s, t\rangle)$ assigns to each $a\in \act$ all and nothing but the successors of $s$ under $a$, symmetrically for $t$, hence the commuting. 

Conversely, given a coalgebraic bisimulation $(R,\overline\nxt)$, the commuting of $\pi_1$ guarantees that $\nxt(\langle s, t\rangle)(a)$ yields all and nothing but the successors of $s$ under $a$. Hence, for each $s\tran{a}s'$ there must be $\langle s',t'\rangle\in\nxt(\langle s, t\rangle)(a)\subseteq R$, moreover, with $t\tran{a}t'$ due to $\pi_2$ commuting.
\end{example}

\medskip

As we have seen, the coalgebraic definition coincides with the relational one for non-probabilistic systems. One can use the same theory for finite probabilistic systems, too. Let $\dist(X)$ denote the set of simple distributions, i.e.\ functions $f:X\to[0,1]$ such that $f$ is non-zero for only finitely many elements $x_1,\ldots,x_n$ and $\sum_{i=1}^n f(x_i)=1$. Note that $\dist(-)$ can be understood as a functor. 

\begin{example}
We can interpret $(\dist(-)\cup\{\cannot\})^\act$-coalgebras as finite Markov decision processes $(S,\act,Pr)$ with $Pr:S\times \act\to\dist(S)\cup\{\cannot\}$ that under each action either proceed to a distribution on successors (as opposed to a non-deterministic choice in LTS) or not have the action available (the special element $\cannot$). The corresponding coalgebraic bisimulation can be shown to coincide with the classical one of Larsen and Skou~\cite{DBLP:conf/popl/LarsenS89}, where an equivalence relation $R$ is a bisimulation if $\sum_{u\in U}Pr(t,a)(u)=\sum_{u\in U}Pr(t',a)(u)$ for every $a\in \act$, classes $T,U$ of $R$ and $t,t'\in T$.
\end{example}

In contrast, uncountable probabilistic systems are more intricate. Let $\measures(X)$ now denote the set of \emph{probability measures} over $X$ (equipped with a $\sigma$-algebra clear from context).
Again, defining $\measures(f)(\mu)=\mu\circ f^{-1}$ makes $\measures(-)$ into a functor.

\begin{example}
We can interpret $\measures(-)$-coalgebras as Markov chains with general (possibly uncountable) state space. However, it is intricate to prove that the corresponding bisimulation is defined so that it coincides with the relational definition as already mentioned in Section~\ref{sec:intro}. 
\end{example}

\begin{example}
PA correspond to $(\powerset(\dist(-)))^\labels$-coalgebras. 
\end{example}

\subsection{Bisimulation on distributions coalgebraically}

The bisimulation we proposed is induced by a different view on the probabilistic systems. Namely, we consider 
distributions (or measures) $\measures(S)$ over its state space $\states$ to form the carrier of the coalgebra. A transition then changes this distribution. For instance, a Markov chain can be seen this way as a coalgebra of the identity functor.

Therefore, in order to capture the distributional semantics of NLMP and other continuous systems, we define a functor\footnote{On function, we define the functor by $\functorinf(f)(n)(A)=(id\times\powerset(f))(n(A))$. Here $\powerset(\labels)$ denotes only the measurable sets of labels.}
\begin{align*}
 \boxed{([0,1] \times \powerset(-))^{\powerset(\labels)} } \tag{$\functorinf$}
\end{align*}
The vital part is not only $[0,1]$, but also the use of measurable sets of labels instead of individual labels.
We can view a NLMP $\pts = (\states, \labels, \{\kernel_\lab \mid \lab\in\labels\})$ as a $\functorinf$-coalgebra with a carrier set $\measures(\states)$. The coalgebra assigns to $\mea \in \measures(\states)$ and to a set of labels $\Lab \in \labelsfield$ the pair $(p,\Mea)$ such that 
\begin{itemize}
 \item $p= \mea(\states_\Lab)$ is the measure of states that can read some $\lab\in\Lab$ where $\states_\Lab = \{ \sta\in\states \mid \exists \lab \in \Lab. \kernel_\lab(\sta) \neq \emptyset \}$;
 \item $\Mea = \emptyset$ if $\mea(\states_\Lab) = 0$, and $\Mea$ is the set of convex combinations\footnote{The set of convex combinations is lifted to a measurable set $Z$ of measures over $\states$ as the set $\{\Sta \mapsto \int_{\mea\in Z} \mea(\Sta) \meatwo(d\,\mea)  \mid \text{$\meatwo$ is a measure over $Z$}\}$.} over $\{\mea_\resolver \mid \text{measurable }\resolver:\states_\Lab\to\bigcup_{\lab\in\Lab}\kernel_\lab  \}$, otherwise, where
 $$ \mea_\resolver(\Sta) = \frac{1}{\mea(\states_\Lab)} \cdot \int_{\sta\in\states} \resolver(\sta)(\Sta)\ \mea(d\sta) \quad \forall \Sta\in\statesfield.$$
\end{itemize}
In other words, $M$ is obtained by restricting $\mu$ to the states that can read $\Lab$ and weighting all possible combinations of their transitions. 

\begin{lemma}\label{lem:inf-coin}
The 
union of $\functorinf$-bisimulations and $\bisim$ coincide.
\end{lemma}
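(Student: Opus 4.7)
The plan is to show mutual containment between the relational bisimulation $\bisim$ and the $\functorinf$-bisimulations on $\measures(\states)$; equality of the maximal (union) relations is then immediate since the witnessing relations are the same.

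First I would unfold what a $\functorinf$-coalgebra structure on a relation $R \subseteq \measures(\states) \times \measures(\states)$ requires. Writing $\overline{\nxt}(\mu,\nu)(A) = (p, \bar M)$, commutativity of the bisimulation square with the two projections $\pi_1, \pi_2$ forces $p = \mu(\states_A) = \nu(\states_A)$ and $\powerset(\pi_1)(\bar M) = \{\mu' : \mu \patransa{A} \mu'\}$, with the dual equation holding for $\pi_2$ and $\nu$. Given this reading, the coalgebraic-to-relational direction is immediate: the equation on $p$ is exactly condition~(1) of Definition~\ref{def:infinite-bisim}, and for any $\mu \patransa{A} \mu'$ the element $\mu'$ lies in $\powerset(\pi_1)(\bar M)$, so some $(\mu',\nu') \in \bar M \subseteq R$ exists with $\nu \patransa{A} \nu'$, giving condition~(2). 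Symmetry, required by Definition~\ref{def:infinite-bisim} but not by the coalgebraic version, is obtained by passing to the symmetric closure $R \cup R^{-1}$, which remains a $\functorinf$-bisimulation by swapping the roles of the two projections.

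Conversely, given a relational probabilistic bisimulation $R$, I would define a coalgebra structure $\overline{\nxt}\colon R \to \functorinf(R)$ by
$$\overline{\nxt}(\mu,\nu)(A) \;=\; \Big(\mu(\states_A),\ \{(\mu', \nu') \in R : \mu \patransa{A} \mu' \text{ and } \nu \patransa{A} \nu'\}\Big).$$
Condition~(1) makes the first component well-defined independently of the chosen representative. Condition~(2), together with the symmetry of $R$, states precisely that $\pi_1$ and $\pi_2$ map $\bar M$ onto $\{\mu' : \mu \patransa{A} \mu'\}$ and $\{\nu' : \nu \patransa{A} \nu'\}$, respectively, so both triangles of the bisimulation diagram commute. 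Taking set-theoretic unions over all witnessing relations on each side then yields the equality claimed.

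The main obstacle is mostly bookkeeping: checking that the degenerate case $\mu(\states_A) = 0$ is consistently handled (the functor forces $\bar M = \emptyset$, matching the fact that $\mu \patransa{A} \mu'$ is undefined in that case), and that the symmetric closure argument is valid for the coalgebraic side. Neither requires new ideas beyond the definitions, so the equivalence reduces to carefully pairing up the relational clauses with the diagram-chasing clauses of the coalgebraic bisimulation.
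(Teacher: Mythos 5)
Your proposal is correct and follows essentially the same route as the paper's proof: both directions are handled by unfolding the commuting square, and your definition of $\overline{\nxt}(\mu,\nu)(A)$ in the converse direction is exactly the paper's $(\mu(\states_A),\, R \cap \suc_A(\mu)\times\suc_A(\nu))$. The only cosmetic difference is that you restore symmetry by closing the coalgebraic relation under inversion, whereas the paper verifies the relational clauses for $R\cup R^{-1}$ directly; these are interchangeable.
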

\begin{proof}
First, we prove that whenever there is $\functorinf$-bisimulation $(R,\overline\nxt)$ with $(\dis,\distwo)\in R$ then $\dis\bisim\distwo$ by proving that $R\cup R^{-1}$ is a bisimulation relation. Let $A\subseteq\act$ and $\dis R \distwo$ or $\distwo R \dis$, w.l.o.g.\ the former (the latter follows symmetrically).
\begin{enumerate}
 \item The first condition of the relational bisimulation follows by
\begin{align*}
\dis(S_A)&=\pi_1(\nxt(\dis)(A))\\
&=\pi_1(\nxt\circ\pi_1\langle\dis,\distwo\rangle(A))\\
&=\pi_1(\functorinf\pi_1\circ\overline\nxt\langle\dis,\distwo\rangle(A))\\ &=\pi_1((id\times\powerset\pi_1)(\overline\nxt\langle\dis,\distwo\rangle(A)))\\
&=id(\pi_1(\overline\nxt\langle\dis,\distwo\rangle(A)))\\
&=\pi_1((id\times\powerset\pi_2)(\overline\nxt\langle\dis,\distwo\rangle(A)))\\
&=\pi_1(\functorinf\pi_2\circ\overline\nxt\langle\dis,\distwo\rangle(A))\\
&=\pi_1(\nxt\circ\pi_2\langle\dis,\distwo\rangle(A))\\
&=\pi_1(\nxt(\distwo)(A))\\
&=\distwo(S_A)
\end{align*}
 \item For the second condition of the relational bisimulation, let $\dis\tran{A}\dis'$. Since
\begin{align*}
\dis'&\in\pi_2(\nxt(\dis))(A)\\
&=\pi_2(\nxt\circ\pi_1\langle\dis,\distwo\rangle(A))\\
&=\pi_2(\functorinf\pi_1\circ\overline\nxt\langle\dis,\distwo\rangle(A))\\
&=\pi_2((id\times\powerset\pi_1)\Big(\overline\nxt\langle\dis,\distwo\rangle(A)\Big))\\
&=\powerset\pi_1(\pi_2\Big(\overline\nxt(\langle\dis,\distwo\rangle)(A)\Big))
\end{align*}
there is $\distwo'$ with
$$\langle\dis',\distwo'\rangle\in \pi_2\Big(\overline\nxt(\langle\dis,\distwo\rangle)(A)\Big)$$
Since $R$ is a coalgebra, we have $\langle\dis',\distwo'\rangle\in R$, i.e.\ $\dis'R\distwo'$.
\end{enumerate}

\medskip

Second, given $R=\mathord{\bisim}$, we define $\overline\nxt$ making it into a coalgebra such that the bisimulation diagram commutes. Let $\suc_A(\dis)=\{\dis'\mid\dis\tran{A}\dis'\}$ denote the set of all $A$-successors of $\dis$. For $\dis R\distwo$, we set
$$\overline\nxt(\langle\dis,\distwo\rangle)(A)=(\dis(S_A),\{\langle\dis',\distwo'\rangle\in R\cap \suc_A(\dis)\times\suc_A(\distwo)\})$$
Since we imposed $\langle\dis',\distwo'\rangle\in R$, $(R,\overline\nxt)$ is a $\functorinf$-coalgebra. Further, we prove the bisimulation diagram commutes. Firstly, 
\begin{align*}
\nxt\circ\pi_1\langle\dis,\distwo\rangle&=(\dis(S_A),\suc_A(\dis))\\ 
\nxt\circ\pi_2\langle\dis,\distwo\rangle&=(\distwo(S_A),\suc_A(\distwo))
\end{align*}
Therefore,
$$\pi_1(\nxt\circ\pi_1\langle\dis,\distwo\rangle)= \dis(S_A)=\pi_1(\functorinf\pi_1(\overline\nxt\langle\dis,\distwo\rangle)(A))$$
and
$$\pi_1(\nxt\circ\pi_2\langle\dis,\distwo\rangle)=\distwo(S_A)=\dis(S_A)= \pi_1(\functorinf\pi_2(\overline\nxt\langle\dis,\distwo\rangle)(A))$$
since $\dis(S_A)=\distwo(S_A)$ due to $\dis\bisim\distwo$ and the first relational bisimulation condition. Secondly,
\begin{align*}
\pi_2(\nxt\circ\pi_1\langle\dis,\distwo\rangle(A))&= \suc_A(\dis)\stackrel{(1)}=\pi_2(\functorinf\pi_1(\overline\nxt\langle\dis,\distwo\rangle)(A))\\
\pi_2(\nxt\circ\pi_2\langle\dis,\distwo\rangle(A))&=\suc_A(\distwo) \stackrel{(2)}=\pi_2(\functorinf\pi_2(\overline\nxt\langle\dis,\distwo\rangle)(A)) 
\end{align*}
After we show $(1)$ and $(2)$, we know both components of  $\functorinf\pi_1(\overline\nxt\langle\dis,\distwo\rangle)(A)$ are the same as of $\nxt(\pi_1\langle\dis,\distwo\rangle)(A)$, and similarly for $\functorinf\pi_2$, hence the commuting. As to $(1)$, $\supseteq$ follows directly by $\overline\nxt$ defined above. For $\subseteq$, for every $\dis'\in\suc_A(\dis)$ there is $\distwo'\in\suc_A(\distwo)$ with $\dis'R\distwo'$ due to the second realtional bisimulation condition. Thus also $\langle\dis',\distwo'\rangle\in \functorinf\pi_1 (\overline\nxt\langle\dis,\distwo\rangle)(A)$. $(2)$ follows from symmetric argument and $R$ being symmetric. \QED
\end{proof}

\subsection{Related bisimulations}

For \emph{discrete} systmes, one could define a functor for finite probabilistic systems with non-determinism by
\begin{align*}
{ ([0,1] \times \powerset(-))^\labels  \tag{$\functorfin$}}
\end{align*}

Now a PA $(\states, \labels, \patrans)$ is a $\functorfin$-coalgebra 
with the carrier set $\dist(\states)$. Indeed, the coalgebra assigns to a distribution $\dis$ and a label $\lab$ the pair $(p,\Dis)$ where 
\begin{itemize}
 \item $p= \dis(\states_\lab)$ is the probability of states that can read $\lab$;
\item $\Mea = \emptyset$ if $\mea(\states_\lab) = 0$, and $\Mea$ is the set of convex combinations over $\{\frac{1}{\dis(\states_\lab)}\sum_{\sta \in\states_\lab} \distwo_\sta \cdot \dis(\sta) \mid \forall \sta\in\states_\lab.\sta \patransa{\lab} \distwo_\sta\}$, otherwise.
We write $\dis \patransa{\lab} \dis'$ for every $\dis'\in \Dis$.
\end{itemize}

\begin{remark}
The union of $\functorfin$-bisimulations and bisimulation of \cite{DBLP:journals/corr/FengZ13}, denoted by $\lijunbisim$, coincide.
\end{remark}

Although we can use $\functorfin$ to capture the distribution semantics of PA as above, we could as well use it differently:
if we defined that a label that cannot be read in the current state is  \emph{ignored} instead of halting, the successor distribution would be defined by making a step from states that can read the label and staying elsewhere. (This approach is discussed in the next section.) 

\medskip

Moreover, we could easily extend the functor to systems with real rewards (as in~\cite{DBLP:conf/ijcai/CastroPP09}) 
simply by adding $\Rset$ to get $\Rset\times([0,1] \times  \powerset(-))^\labels $ for rewards on states or $([0,1] \times \powerset(\Rset\times -))^\labels $ on transitions etc. Similarly, for systems without the inner non-determinism like Rabin automata, we could simplify the functor to $([0,1] \times -)^\labels$. The only important and novel part of the functor is $[0,1]$ stating the overall probability mass that performs the step. (This is also the only difference to non-probabilistic coalgebraic functors.)
In all the cases, the generic $\functorfin$-bisimulation keeps the same shape. What changes is the induced relational bisimulation. 

\section{Applications}
\label{sec:applications-app}

In the following subsections, we justify the proposed 
bisimulation  yielded by 
$\functorinf$
by reviewing its application areas and comparing it to other bisimulations in these areas.

\subsection{Bisimulation in compositional modelling of distributed systems}

Probabilistic automata are apt for compositional modelling of communicating parallel systems. This way, the whole system is built bottom-up connecting smaller components into larger by the parallel composition operator. To tackle the state space explosion, minimisation algorithms can be applied throughout the process after each composition. 
Computing the quotient according to a bisimulation serves well as a minimisation algorithm if the bisimulation is a congruence w.r.t. parallel composition.
This condition is satisfied by the (also distribution-based) strong bisimulation recently defined by Hennessy~\cite{DBLP:dblp_journals/fac/Hennessy12}, denoted by $\bisim_\hennessy$. This is not the case with $\bisim$ as shown in the following example.

%
%
%
%
%

\begin{example}
According to our definition, $u \bisim v$ because $\frac{1}{2} u_h + \frac{1}{2} u_t \bisim v'$. In contrast, $u \not\bisim_\hennessy v$.
Therefore, $\bisim_\hennessy$ is strictly finer than $\bisim$. Actually, $\bisim_\hennessy$ coincides (on Dirac distributions) with the standard probabilistic bisimulation 
of Larsen and Skou \cite{DBLP:conf/popl/LarsenS89} which distinguishes $u$ and $v$ as well.

\begin{tikzpicture}[outer sep=0.1em,->,
state/.style={draw,circle,minimum size=1.6em,inner sep=0.1em}]

\begin{scope}[xscale=1.2]
\node (sm1) at (0.1,0) [state] {$v$};
\node (s0) at (1,0) [state] {$v'$};
\node (s0a) at (1.5,0) [inner sep=0, outer sep=0,minimum width=0] {};
\node (s1) at (2,0.6) [state] {};
\node (s2) at (2,-0.6) [state] {};

\draw [<-] (sm1) -- +(-0.5,0);
\draw (sm1) to node[auto,pos=0.6] {$a$} (s0);

\path[-] (s0) edge node[below=-1,pos=0.3]{$a$} (s0a);

\draw (s0a) to node[above left=-4,pos=0.7] {$\frac{1}{2}$} (s1);
\draw (s0a) to node[below left=-4,pos=0.7] {$\frac{1}{2}$} (s2);

\draw [loop right,looseness=5] (s1) to node[auto] {$h$} (s1);
\draw [loop right,looseness=5] (s2) to node[auto] {$t$} (s2);
\end{scope}

\begin{scope}[xscale=1.2,xshift=-10em]
\node (s0) at (0,0) [state] {$u$};
\node (s0a) at (0.5,0) [inner sep=0, outer sep=0,minimum width=0] {};
\node (s1) at (1,0.6) [state] {$u_h$};
\node (s2) at (1,-0.6) [state] {$u_t$};

\node (s3) at (1.9,0.6) [state] {};
\node (s4) at (1.9,-0.6) [state] {};

\draw [<-] (s0) -- +(-0.5,0);

\path[-] (s0) edge node[below=-1,pos=0.3]{$a$} (s0a);

\draw (s0a) to node[above left=-4,pos=0.7] {$\frac{1}{2}$} (s1);
\draw (s0a) to node[below left=-4,pos=0.7] {$\frac{1}{2}$} (s2);

\draw (s1) to node[auto,pos=0.6] {$a$} (s3);
\draw (s2) to node[auto,pos=0.6,swap] {$a$} (s4);

\draw [loop right,looseness=5] (s3) to node[auto] {$h$} (s3);
\draw [loop right,looseness=5] (s4) to node[auto] {$t$} (s4);
\end{scope}
\end{tikzpicture}


Let $\parallel_A$ denotes the $CSP$-style full synchronization on labels from $A$ and interleaving on $\labels \setminus A$. Then $\bisim$ is not a congruence w.r.t. $\parallel_A$ as $u \parallel_\labels s \not\bisim v \parallel_\labels s$ for $s$ depicted  below.
\vspace{-0.5em}
\begin{center}
\begin{tikzpicture}[outer sep=0.1em,->,
state/.style={draw,circle,minimum size=1.6em, inner sep=0.1}]

\begin{scope}[xscale=1.2]
\node (sm1) at (0.1,0) [state] {$s$};
\node (s0) at (1,0) [state] {$s'$};
\node (s1) at (2,0.6) [state] {};
\node (s2) at (2,-0.6) [state] {};

\draw [<-] (sm1) -- +(-0.5,0);
\draw (sm1) to node[auto,pos=0.6] {$a$} (s0);

\draw (s0) to node[above left=-4,pos=0.7] {$a$} (s1);
\draw (s0) to node[below left=-4,pos=0.7] {$a$} (s2);

\draw [loop right,looseness=5] (s1) to node[auto] {$h$} (s1);
\draw [loop right,looseness=5] (s2) to node[auto] {$t$} (s2);
\end{scope}
\end{tikzpicture}
\end{center}
\vspace{-0.3em} 

\end{example}

This is actually a classical example, due to~\cite{Segala:1996:MVR:239648}, modelling a process $u$ (or $v$) generating a secret by tossing a coin and the process $s$ guessing the secret. If $s$ guesses correctly, they synchronize forever on $h$ or $t$; otherwise, they halt. 
In $u \parallel_\labels s$, the non-determinism can be resolved by a \emph{scheduler} in such a way that the guesser makes a correct guess with probability $1$ which is not possible in $v \parallel_\labels s$ because the secret is generated later. This is overly pessimistic in the context of \emph{distributed systems} where the guesser observes only the communication with the tosser and not its state. Namely, the systems $u \parallel_\labels s$ and $v \parallel_\labels s$ exhibit the same behaviour (correct guess with probability at most $1/2$) if the non-determinism is resolved by \emph{distributed} schedulers~\cite{DBLP:conf/concur/AlfaroHJ01,Che06,DBLP:conf/formats/GiroD07}. This means that the non-determinism in each component of the composition is resolved independently of the state of the other component.

\subsection{Bisimulation for partially observable MDPs}

In the distributed setting it is natural to assume that the state space of each component is \emph{fully} unobservable from outside. This is a special case of \emph{partially} observable systems, such as partially observable Markov decision processes (POMDP). POMDPs have a wide range of applications in robotic control, automated planning, dialogue systems, medical diagnosis, and many other areas~\cite{DBLP:journals/aamas/ShaniPK13}.

In the analysis of POMDP, the distributions over states, called \emph{beliefs}, arise naturally and yield a continuous-space (fully observable) belief MDP. Therefore, probabilistic bisimulations over beliefs have been already studied~\cite{DBLP:conf/ijcai/CastroPP09,DBLP:conf/nfm/JansenNZ12}. However, no connection of this particular case to general probabilistic bisimulation has been studied. 

There are various (equivalent) definitions of POMDP, we use one close to computational game theory~\cite{DBLP:conf/mfcs/ChatterjeeDH10}.

\begin{definition}
 A \emph{partially observable Markov decision process (POMDP)} is a tuple $\mathcal{M} = (\states,\delta, \mathcal{O})$ where $\states$ is a set of states, $\delta \subseteq \states \times \dist(\states)$ is a transition relation, and $\mathcal{O} \subseteq 2^\states$ is a set of observations that partition the state space. 
\end{definition}

This formalism is also known as \emph{labelled Markov decision processes}~\cite{DBLP:journals/ijfcs/DoyenHR08} where state labels correspond to observations. 
Such a state-labelled system $\mathcal{M} = (\states,\delta, \mathcal{O})$ can be easily translated to an action-labelled PA $\PA_{\mathcal{M}} = (\states,\mathcal{O}, \patrans)$ where $\sta \patransa{o} \dis$ if $\sta \in o$ and $(\sta,\dis) \in \delta$. This way, we can define $\dis \bisim \dis'$ in $\mathcal{M}$ if $\dis \bisim \dis'$ in $\PA_\mathcal{M}$.

Hence, in Section~\ref{sec:algorithms-finite}, we give the first algorithm for computing bisimulations over beliefs in finite POMDP. Previously, there was only an algorithm~\cite{DBLP:conf/nfm/JansenNZ12} for computing bisimulations on distributions of Markov chains with partial observation. 


\subsection{Bisimulation for large-population models}

In the sense of~\cite{DBLP:journals/deds/GastG11,DBLP:journals/tcs/McCaigNS11,may1974biological,jovanovic1988anonymous}, we can understand PA as a description of one \emph{agent} in a large homogeneous population. For example a \emph{chemical compounds}, a \emph{node of a computer grid}, or a \emph{customer of a chain store}.
%

The distribution perspective is a natural one -- the distribution specifies the ratios of agents being currently in the individual states. For a Markov chain, this gives a deterministic process over the continuous space of distributions. 

The non-determinism of PA has also a natural interpretation.
Labels given to this large population of PAs correspond to global control actions~\cite{DBLP:journals/tac/GastGB12,DBLP:journals/deds/GastG11} such as \emph{manipulation with the chemical solution}, a \emph{broadcast within the grid}, or a \emph{marketing campaign of the chain store}. Agents react to this control action if currently in a state with transition under this label, otherwise they ignore it.
Multiple transitions under this label correspond to multiple ways how the agent may react. 


\begin{example}
Let us illustrate the idea by an example of three models of customers of a chain store with half of the population in state $1$ and half of the population in state $3$. 
\begin{center}
\begin{tikzpicture}[outer sep=0.1em,->,yscale=0.9,
state/.style={draw,circle, minimum size=1.7em,inner sep=0.1em
,inner sep =0.1em,text centered},
trans/.style={font=\scriptsize\itshape},
]

\begin{scope}[xscale=1,yscale=0.8]
%
%
\node (sm0) at (1.75,0) [state] {2};
\node (sm1) at (0,0) [state] {1};

\node (s1) at (2.5,-2) [state] {4};

\node (s2) at (1,-2) [state] {3};



\draw (sm0) to node[trans,auto,pos=0.7] {yoghurt ad} (s1);
\draw (s1) [loop below,looseness=5] to node[trans,auto] {buy y.} (s1);

\draw (sm0) to node[trans,auto,swap,pos=0.7] {m\"{u}ssli ad} (s2);
\draw (s2) [loop below, looseness=5] to node[trans,auto] {buy m.} (s2);

\draw [-] (3.8,0.5) -- (3.8,-3.2);

\end{scope}

\begin{scope}[xscale=1,yscale=0.8,xshift=15em]
%
%
\node (sm0) at (1.75,0) [state] {2};
\node (sm1) at (0,0) [state] {1};

\node (s1) at (1.75,-2) [state] {4};

\node (s2) at (0,-2) [state] {3};



\draw (sm0) to node[trans,auto,swap] {yoghurt ad} (s1);
\draw (s1) [loop below,looseness=5] to node[trans,auto] {buy y.} (s1);

\draw (sm1) to node[trans,auto,swap] {m\"{u}ssli ad} (s2);
\draw (s2) [loop below, looseness=5] to node[trans,auto] {buy m.} (s2);


\draw [-] (2.3,0.5) -- (2.3,-3.2);
\end{scope}

\begin{scope}[xscale=1,yscale=0.8,xshift=23.5em]
%
%
\node (sm0) at (2,0) [state] {2};
\node (sm1) at (0,0) [state] {1};

\node (s1) at (2.75,-1.5) [state] {4};

\node (s2) at (1.25,-1.5) [state] {3};
\node (s3) at (2,-3) [state] {5};



\draw (sm0) to node[trans,right,pos=0.6] {yoghurt ad} (s1);
\draw (s1) [loop right,looseness=3] to node[trans] {buy y.} (s1);

\draw (sm0) to node[trans,left,swap,pos=0.6] {m\"{u}ssli ad} (s2);
\draw (s2) [loop left, looseness=3] to node[trans,left=-2] {buy m.} (s2);

\draw (s1) to node[trans,auto,pos=0.1] {m\"{u}ssli ad} (s3);
\draw (s2) to node[trans,auto,pos=0.1,swap] {yoghurt ad} (s3);

\draw (s3) [loop right,looseness=3] to node[trans] {buy y.} (s3);
\draw (s3) [loop left, looseness=3] to node[trans,left=-2] {buy m.} (s3);

\end{scope}

\end{tikzpicture}
\end{center}
It is natural to assume that these three models can be distinguished. Indeed none of the populations are bisimilar according to our definition.
Note however, that the related distribution-based bisimulation of~\cite{DBLP:journals/corr/FengZ13} that allows only singletons $\Lab$ in Definition~\ref{def:infinite-bisim} does not distinguish the first and the second population. Their definition actually extends the bisimulation of~\cite{DBLP:journals/ijfcs/DoyenHR08} defined on input-enabled models; they naturally transform general probabilistic automata to input-enabled ones by directing the missing transitions into a newly added sink state. Observe that the similarly natural alternative approach of adding self-loops does not distinguish the second and the third population. 
\end{example}

\section{Technical details and proofs from Section~\ref{sec:applications}}

Let us first formalize in more detail the concepts we relate to in the main body.

\subsection{Continuous-time Markov chains}

\begin{definition}
A CTMC $\mathcal{C}$ is a tuple $(S,Q)$ where $S$ is a finite set of states, and $Q: S\times S \to \Rsetpo$ is a rate matrix such that $Q(s,s'^) = 0$ denotes that there is no transition from $s$ to $s'$.
\end{definition}

\subsubsection{Parallel composition} For two CTMC $(S_1,Q_1)$ and $(S_2,Q_2)$ with initial states $s_1$ and $s_2$ we define their (full interleaving) parallel composition $\mathcal{C}_1 \parallel_{CT} \mathcal{C}_2$ as $(S_1\times S_2, Q')$ with the initial state $(s_1,s_2)$  where 
$$Q((s_1,s_2),(s'_1,s'_2)) = \begin{cases}
 Q_1(s_1,s'_1) & \text{if $s_2 = s'_2$,} \\
 Q_1(s_1,s'_1) & \text{if $s_2 = s'_2$,}  \\
 0 & \text{otherwise.}
 \end{cases}
 $$
 
\subsubsection{Embedding} Finally, to each CTMC $\mathcal{C} = (S,Q)$ with initial state $s_0 \in S$, 
we define a stochastic automaton $SA(\mathcal{C}) = (S,\states\times\states,\{\act\},\edges,\clocksetting, \clockdist)$ with initial location $s_0$ where 
  \begin{itemize}
  \item $(s,\act,\{(s,s')\},s') \,\in\; \edges$ for any $s,s' \in \states$,
  \item $\clocksetting(s) = \{(s,s') \mid Q(s,s') > 0 \}$,
  \item $\clockdist((s_1,s_2)) = Exp(Q(s_1,s_2))$ 
  \end{itemize}

\subsection{Stochastic automata}

\subsubsection{Semantics $\pts_\SA$ of stochastic automata}
Let $\SA = (\locations,\clocks,\actions,\edges,\clocksetting, \clockdist)$ be a stochastic automaton with initial location $\loc_0$. 
We define the semantical NLMP $\pts_\SA = (\locations \times
\Rset^\clocks, \actions \times \Rsetpo, \{\kernel_\lab \mid \lab\in\labels\})$.
A state $(\loc,\val)$ denotes being in location $\loc$ where each clock $c$ has value $\val(c)$.
The NLMP $\pts_\SA$ is initiated according to a initial measure $\mea$ over the state space of $\pts_\SA$ such that
\begin{itemize}
\item the marginal in the first component being Dirac on $\loc_0$;
\item the marginal for any $\clo \not\in \clocksetting(\loc_0)$ being Dirac on $0$;
 \item the marginals for each $\clo \in \clocksetting(\loc_0)$ having CDF $\clockdist(\clo)$, and their product being equal to the joint distribution of $\clocksetting(\loc_0)$.
\end{itemize}
In $(\loc,\val)$, a label of the form $(a,t)$ is available if $E_a \neq \emptyset$ where $E_a$ is the set of edges that have action $a$ and become available after the idling time $t$.
We set $\kernel_{(a,t)}((\loc,\val)) = \{\mu_e \mid e \in E_a \}$ where $\mu_e$ for an edge $e = (q,a,C,q')$ is the probability measure over states with (similarly to the previous case)
\begin{enumerate}
 \item the marginal in the first component being Dirac on $\loc'$;
 \item the marginal for any $\clo \not\in \clocksetting(\loc')$ being Dirac on $\val(\clo) - t$;
 \item the marginals for each $\clo \in \clocksetting(\loc')$ having CDF $\clockdist(\clo)$, and their product being equal to the joint distribution of $\clocksetting(\loc')$.
\end{enumerate}
Intuitively, it (1) moves to $\loc'$, (2) decreases values of clocks by $t$, and (3) sets clocks of $\clocksetting(\loc')$ to independent random values.

 \subsubsection{Parallel composition} Further, for two SA $\SA_1 = (\locations_1,\clocks_1,\actions_1,\edges_1,\clocksetting_1, \clockdist_1)$ and $\SA_2 = (\locations_2,\clocks_2,\actions_2,\edges_2,\clocksetting_2, \clockdist_2)$ with initial locations $\loc_1$ and $\loc_2$ we define their full interleaving parallel composition $\SA_1 \parallel_{S\!A} \SA_2$ as the tuple $(\locations_1 \times \locations_2 \times \{0,1,2\},\clocks_1 \cup \clocks_2,\actions_1 \cup \actions_2,\edges,\clocksetting, \clockdist)$ with initial location $(\loc_1,\loc_2,0)$, where the third component of a location denotes which of the two SA moved the last step and where 
 \begin{itemize}
 \item $\edges$ is the smallest relation satisfying 
 \begin{itemize}
 \item $(\loc, \act, \Clo,\loc') \in \edges_1$ implies $((\loc,\loc_2,b),\act, \Clo, (\loc',\loc_2,1\{)) \in \edges$ for any $\loc_2 \in \locations_2$ and $b\in \{0,1,2\}$ and 
 \item $(\loc, \act, \Clo,\loc') \in \edges_2$ implies $((\loc_1,\loc,b),\act, \Clo, (\loc_1,\loc',2)) \in \edges$ for any $\loc_1 \in \locations_1$ and $b\in\{0,1,2\}$;
 \end{itemize}
 \item $\clocksetting((\loc_1,\loc_2,b)) = \clocksetting_b(\loc_b)$ if $b \in \{1,2\}$ and $\clocksetting((\loc_1,\loc_2,0)) = \clocksetting_1(\loc_1) \cup \clocksetting_2(\loc_2)$,
 \item $\clockdist$ assigns $\clockdist_1(\clo)$ to $\clo \in \clocks_1$ and $\clockdist_2(\clo)$ from $\clo \in \clocks_2$.
 \end{itemize}
 
\subsection{Proof of Theorem~\ref{thm:sta-commute}} Let us recall the theorem.

\begin{reftheorem}{thm:sta-commute}
  Let $SA(\mathcal{C})$ denote the stochastic automaton corresponding
  to a CTMC $\mathcal{C}$.  For any CTMC $\mathcal{C}_1,
  \mathcal{C}_2$, we have
 $$SA(\mathcal{C}_1) \parallel_{S\!A} SA(\mathcal{C}_1) 
 \;\; \bisim \;\;  
 SA(\mathcal{C}_1 \parallel_{CT} \mathcal{C}_1).$$ 
\end{reftheorem}
\begin{proof}
It is easy to see that to each location $(s_1,s_2)$ in the system on the right there are three locations of the form $(s_1,s_2,b)$ in the system on the left, that differ only in the third component $b$, i.e. they
\begin{itemize}
\item have the same set of edges,
\item have the same set $Pos(s_1,s_2)$ of clocks that are positive in each location,
\end{itemize}
and differ only in the sets of clocks $\clocksetting$ to be re-sampled. 

We show that $(s_1,s_2) \bisim (s_1,s_2,b)$ for any $b \in \{0,1,2\}$ by applying the arguments from the algorithm in Section~\ref{sec:algorithms-infinite}.
Let $\chain_L$ and $\chain_R$ denote the finite systems from Lemma~\ref{lem:expo-finite} obtained from the systems on the left and on the right, respectively.
The distribution of clocks in each location $\loc = (s_1,s_2)$ or $\loc = (s_1,s_2,b)$ is $\loc \otimes \bigotimes_{c\in Pos(s_1,s_2)} Exp(\lambda_c)$.
Hence, to each state on the right, there are at most 3 reachable states on the left with the same clock distributions. 
Thanks to the same edges and same clock distributions, these three states are indistinguishable by the \tableau{Step} rule.\QED

\end{proof}

\section{Proofs from Section~\ref{sec:algorithms}}

\subsection{Discrete systems}

We use the notation $\mu\oplus_p\nu$ to denote $(1-p)\mu+p\nu$. Further, for a (not necessarily probabilistic) measure $\dis=(\dis(s_1),\ldots,\dis(s_{|\states|}))$ we denote $|\dis|=\sum_{i=1}^{|\states|}\dis(s_i)$. For any probability distribution $\dis$ thus $|\dis|=1$.

\begin{reflemma}{lem:existence}
For every linear bisimulation there exists a corresponding bisimulation matrix.
\end{reflemma}
\begin{proof}
Let $R$ be a linear bisimulation and
$\Gamma$ an arbitrary equivalence class of $R$.
Due to linearity, $\Gamma$ is closed under convex combinations. Consider $\bar\Gamma$ the affine closure of $\Gamma$, i.e. the smallest set that is closed under affine combinations. Then (i) $\bar\Gamma$ is an affine subspace, and (ii) $\bar\Gamma\cap\measures(\states)=\Gamma$. This holds for every class of $R$. Hence $\{\bar\Gamma\mid\Gamma \text{ is an equivalence class of } R\}$ decomposes  $\mathbb R^{|\states|}$ and all $\bar\Gamma$ have the same difference space $\bar\Delta:=\{\mu-\nu\mid \mu,\nu\in\bar\Gamma\}$ (independent of choice of $\Gamma$). Since $\bar\Delta$ is a linear subspace, there is a matrix $\B$ such that $\rho\in\bar\Delta$ iff $\rho\B=0$.

For every $\dis R \distwo$ we thus have $(\dis-\distwo)\B=0$. 
In the other direction, let $\dis\in\Gamma$ and $\distwo$ be arbitrary distribution such that $(\dis-\distwo)\B=0$. We thus have $\dis-\distwo\in\bar\Delta$. Since $\dis\in\bar\Gamma$ we thus get $\distwo\in\bar\Gamma$. Since $\distwo\in\measures(\states)$, we finally obtain $\distwo\in \Gamma$ and thus $\dis R \distwo$. \QED
\end{proof}

\begin{lemma}\label{lem:linearity}
$\bisim$ is linear.
\end{lemma}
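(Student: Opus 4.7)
The plan is to exhibit a concrete relation that (i) contains every convex combination we need, and (ii) is itself a probabilistic bisimulation, so that by maximality it is contained in $\bisim$. Concretely, I would define
\[
R \;=\; \bigl\{\,\bigl(p\mu_1+(1-p)\mu_2,\ p\nu_1+(1-p)\nu_2\bigr) \;\bigm|\; \mu_1\bisim\nu_1,\ \mu_2\bisim\nu_2,\ p\in[0,1]\,\bigr\},
\]
and verify the two clauses of Definition~\ref{def:infinite-bisim} for $R$. Symmetry of $R$ is inherited from symmetry of $\bisim$, and the first clause is immediate from linearity of measures: $(p\mu_1+(1-p)\mu_2)(\states_\Lab) = p\mu_1(\states_\Lab)+(1-p)\mu_2(\states_\Lab) = p\nu_1(\states_\Lab)+(1-p)\nu_2(\states_\Lab) = (p\nu_1+(1-p)\nu_2)(\states_\Lab)$ using clause~1 applied to the two bisimilar pairs.

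The second clause is the step that needs care. Given a transition $p\mu_1+(1-p)\mu_2 \probtranlong{\Lab} \rho$, I would first show that it decomposes through the components: the witnessing measurable selector $\sta\mapsto\rho_\sta$ restricted and renormalised to $\mu_1$ (resp.\ $\mu_2$) yields transitions $\mu_1\probtranlong{\Lab}\mu_1'$ and $\mu_2\probtranlong{\Lab}\mu_2'$ with $\rho = q\mu_1' + (1-q)\mu_2'$, where $q = p\,\mu_1(\states_\Lab)/(p\mu_1+(1-p)\mu_2)(\states_\Lab)$. This is a direct computation once the integral defining $\rho$ is split across the two component measures on $\states_\Lab$. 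Applying the bisimulation hypothesis twice produces $\nu_1\probtranlong{\Lab}\nu_1'$ and $\nu_2\probtranlong{\Lab}\nu_2'$ with $\mu_i'\bisim\nu_i'$.

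The final task is to recombine these two transitions into one transition $p\nu_1+(1-p)\nu_2 \probtranlong{\Lab} \sigma$ producing $\sigma = q\nu_1' + (1-q)\nu_2'$, since then $(\rho,\sigma)\in R$ by construction. For this I would take the measurable Radon--Nikodym derivative $\lambda_\sta = \frac{d(p\nu_1)}{d(p\nu_1+(1-p)\nu_2)}(\sta)$ on $\states_\Lab$ and, at each such $\sta$, define the new selector by the convex combination $\tau_\sta = \lambda_\sta\tau^1_\sta + (1-\lambda_\sta)\tau^2_\sta$ of the selectors witnessing $\nu_1\probtranlong{\Lab}\nu_1'$ and $\nu_2\probtranlong{\Lab}\nu_2'$. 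Because $\kernel_\lab(\sta)$ is closed under convex combinations in the definition of $\patrans$, $\tau_\sta$ is still a legal successor distribution at $\sta$, so this is a valid selector; a short computation using the Radon--Nikodym identity $\lambda_\sta(p\nu_1+(1-p)\nu_2)(d\sta)=p\nu_1(d\sta)$ then shows the resulting integral equals exactly $q\nu_1'+(1-q)\nu_2'$.

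The main obstacle is precisely this last step: making sure that the two independently chosen measurable selectors for the $\nu_i$'s can be glued into a single measurable selector for $p\nu_1+(1-p)\nu_2$ that actually lands in the convex hull $\kernel_\Lab(\sta)$ at each state, which is where closure under convex combinations (and measurability of the Radon--Nikodym derivative) is essential. Once this gluing lemma is in place, $R$ satisfies both bisimulation clauses, hence $R\subseteq{\bisim}$, establishing linearity. \qed
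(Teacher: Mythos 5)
Your proof is correct and takes essentially the same route as the paper's: the paper phrases the argument as a Spoiler--Duplicator game in which Spoiler's move on $\dis_1\oplus_p\dis_2$ is decomposed into moves on the two components, Duplicator responds componentwise using the two hypotheses, and the responses are mixed back together --- exactly your decompose--respond--recombine argument, packaged as an explicit witness relation $R$ instead of a game. The only difference is one of detail: the paper dismisses the recombination with ``clearly a choice conforming to the rules,'' whereas you make the gluing of the two selectors via the Radon--Nikodym derivative explicit.
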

\begin{proof}
We prove that $\dis_1\sim\distwo_1$ and $\dis_2\sim\distwo_2$ imply $\dis_1\oplus_p\dis_2\sim\distwo_1\oplus_p\distwo_2$ for any $p\in[0,1]$. This follows easily from the Spoiler-Duplicator game. Indeed, let Duplicator have a winning response to every Spoiler's strategy both in $\dis_1\sim\distwo_1$ and $\dis_2\sim\distwo_2$. Let now $p\in[0,1]$. Any Spoiler's strategy on $\dis_1\oplus_p\dis_2\sim\distwo_1\oplus_p\distwo_2$ (w.l.o.g.\ attacking on the left under $A$) can be decomposed to a part acting on $(1-p)\dis_1$ resulting into $\Big((1-p)\dis_1(S_A),(1-p)\dis_1'\Big)$ and a part acting on $p\distwo$ resulting into $\Big(p\dis_2(S_A),p\dis_2'\Big)$. Duplicator has a winning response $\distwo_1'$ to the former (when applied to the whole $\dis_1$) and also $\distwo_2'$ to the latter (when applied to the whole $\dis_2$). Duplicator can now mix his 
responses resulting into $\distwo_1'\oplus_p\distwo_2'$, which is clearly a choice conforming both to the rules, since $(\distwo_1\oplus_p\distwo_2)(S_A)=(1-p)\distwo_1(S_A)+p\distwo_2(S_A)=(1-p)\dis_1(S_A)+p\dis_2(S_A)=(\dis_1\oplus_p\dis_2)(S_A)$ and also winning as the resultinig pair is again a convex combination of individual resulting pairs.\QED
\end{proof}

Thus minimal bisimulation matrices always exist.

\begin{corollary}
There is a minimal bisimulation matrix, i.e.\ a matrix $\B$ such that for any $\dis,\distwo\in\measures(\states)$, we have  $\dis\sim\distwo$ iff $(\dis-\distwo)\B=0$.
\end{corollary}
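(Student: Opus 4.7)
The plan is to derive this corollary directly by combining the two preceding lemmas, since together they hand us everything we need. First, I would invoke Lemma~\ref{lem:linearity} to observe that the relation $\bisim$ itself is a linear bisimulation: it is a bisimulation by definition (as the union of all bisimulations), and its linearity is exactly the content of Lemma~\ref{lem:linearity}. Then Lemma~\ref{lem:existence}, applied to $\bisim$, yields a matrix $\B$ with the property that $\dis \bisim \distwo$ iff $(\dis - \distwo)\B = 0$, which is the main assertion of the corollary.

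The only extra step is to check the word ``minimal.'' Here I would argue as follows. Suppose $\B'$ is any other bisimulation matrix; by definition it corresponds to some bisimulation relation $R$. Because $\bisim$ is the greatest bisimulation (the union of all bisimulations, so $R \subseteq \bisim$), we get the implication
\[
(\dis - \distwo)\B' = 0 \;\Longrightarrow\; \dis\, R\, \distwo \;\Longrightarrow\; \dis \bisim \distwo \;\Longrightarrow\; (\dis - \distwo)\B = 0.
\]
Restricted to probability distributions this gives a containment, but in fact by the linearity argument already used in the proof of Lemma~\ref{lem:existence} (affine closure of bisimulation classes), the solution space of $\rho \B = 0$ contains the solution space of $\rho \B' = 0$ for all $\rho \in \Rset^{|\states|}$. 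Hence $\B$ has no more linearly independent columns than $\B'$, and its solution space is uniquely determined up to change of basis. This matches the notion of ``minimal / least restrictive bisimulation matrix'' introduced in the discussion following Lemma~\ref{lem:existence}.

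I do not foresee any serious obstacle: the linearity closure argument is the technical core and is already discharged by Lemma~\ref{lem:linearity}, while the matrix construction is handled by Lemma~\ref{lem:existence}. The only subtlety is recording that ``minimality'' refers to having the largest solution space, which is immediate from $\bisim$ being the coarsest bisimulation. Consequently, the proof should be essentially a two-line appeal to the lemmas plus a brief justification of minimality.
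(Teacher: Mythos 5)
Your proof is correct and follows exactly the route the paper intends: the corollary is stated immediately after Lemma~\ref{lem:linearity} with no proof beyond ``Thus minimal bisimulation matrices always exist,'' i.e.\ it is obtained by feeding the linearity of $\bisim$ (Lemma~\ref{lem:linearity}) into Lemma~\ref{lem:existence}, precisely as you do. Your additional justification of the word ``minimal'' matches the paper's informal remark that the matrix for the coarsest bisimulation is the least restrictive one, so nothing further is needed.
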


We are searching for the least restrictive system $\B$ satisfying stability. Therefore, we can compute $\bisim$, i.e.\ the greatest fixpoint of the bisimulation requirement of stability, as the least fixpont of the partitioning procedure of adding equations. Indeed, recall that all bisimulation matrices with the least possible dimension have the same solution space.

\begin{refproposition}{prop-alg-determ}
In an action deterministic PA, $\B$ containing $\vec 1$ is a bisimulation matrix iff it is  $P_\lab$-stable for all $\lab\in\labels$.
\end{refproposition}
\begin{proof}
Firstly, we prove that for any $\lab\in\labels$, any bisimulation matrix $\B$ is $P_\lab$-stable. Let $\rho$ be such that $\rho\B=0$. Let us write $\rho=\dis-\distwo$ where entries in $\dis$ and $\distwo$ are non-negative. Since $\B$ contains $\vec1$, we have $|\dis|=|\distwo|$, moreover, for the moment assumed, equal 1. Then $\rho$ is a difference of two measures $\dis-\distwo$. Since $\B$ is a bisimulation matrix, we have $\dis\bisim\distwo$. Therefore, if Spoiler attacks under $a$, we have $\dis P_a\bisim\distwo P_a$. Therefore, $(\dis P_a-\distwo P_a)E=0$, equivalently $\rho P_aE=0$. In the general case, where $|\dis|=|\distwo|$ is not equal $1$, we can egard them as a scalar multiples of measures, normalize them, and use the same reasoning (with the exception when they are $\vec0$, in which case the claim for $\rho=\vec0$ holds trivially).

Secondly, let $\B$ contain $\vec1$ and be $P_\lab$-stable for all $\lab\in\labels$. We show that $R$ defined by $\dis R\distwo$ iff $(\dis-\distwo)\B=0$ is a bisimulation relation. Consider now $A\subseteq\act$ singletons. The first bisimulation condition for $a\in\act$ follows from $(\dis-\distwo)P_a\vec1=0$. The second one then from $(\dis-\distwo)P_a\B=0$ implying $(\dis P_a-\distwo P_a)\B=0$ by stability. For general $A\subseteq\act$, the bisimulation condition does not generate any new requirements due to the action determinism. Since $S_A$ is a disjoint union of $S_a$ for $a\in A$, the properties follow from the properties of singeltons. \QED
\end{proof}


We recall that for elements of $\rohy$ are tuples of corners of $C_i$'s that are ``extremal in the same direction.''
Formally, we say a point $p$ is \emph{extremal in direction $d$} (in a polytope $P$) if $d$ is a normal vector of a separating hyperplane containing only $p$ from the whole $P$ and such that $p+d$ lies in the other half-space than $P$. 

Intuitively, elements of $\rohy$ are those tuples of corners that form corners of ``combinations'' of $C_i$'s. 
Formally, denote the $|\states|$-dimensional vector of $C_i$'s by $\vec C$.
For a distribution $\mu$, the ``$\mu$-combination of polytopes $C_i$'' is the polytope 
$$\mu\vec C^\top = \{\sum_{i=1}^{|\states|} \mu(s_i){c_i} \mid \forall i: {c_i}\in C_i\}$$ The corners $\corners(\mu\vec C^\top)$ are then exactly $\{\mu c^\top\mid c\in \rohy\}$.

Further, we call that a choice is \emph{extremal} if it can be written as $W(c)$ for some extremal $c$, i.e. $c\in \rohy$.
Note that these points are mapped to pure strategies and achieve \emph{Pareto extremal values} when applied to any distributions, i.e. $\mu c^\top$ is a corner of $\mu \vec C^\top$ for every distribution $\mu$.

\begin{refproposition}{prop-alg-nondeterm}
$\B$ containing $\vec 1$ is a bisimulation matrix iff the matrix is $\PaWc$-stable for all $A\subseteq\labels$ and $c\in\rohy$.
\end{refproposition}
\begin{proof}
 Observe that if $\dis\bisim\distwo$ then $\dis\vec C^\top$ and $\distwo \vec C^\top$ are the same polytopes. Indeed, for every choice on one side there must be a choice on the other side matching in all components. Conversely, if $\dis\vec C^\top\neq\distwo \vec C^\top$ then $\dis\not\bisim\distwo$ as Spoiler can choose a vector that cannot be matched by Duplicator. Note that equality of polytopes $\dis\vec C^\top$ and $\distwo \vec C^\top$ can be tested by equality of the sets of their extremal points. The extremal points are exactly points $\dis \vec c^\top$ and  $\distwo \vec c^\top$ for $\vec c\in \rohy$. 

Hence we prove the two following facts:
\begin{enumerate}
 \item[(1)] the extremal choices, i.e.\ $\rohy$, are sufficient for Spoiler, 
 \item[(2)] for an extremal choice $W\in\rohy$ of Spoiler, $W$ is an optimal reply of Duplicator for any distributions $\dis$ and $\distwo$. 
\end{enumerate}
As to (1), intuitively, if two polytopes are different, there must be a corner of one not in the other by convexity of the polytopes. Formally, for given $\dis\not\bisim\distwo$, $\dis\vec C^\top\neq\distwo \vec C^\top$ and an optimal choice of Spoiler is a $W(c)$ such that $\dis c^\top\notin \distwo \vec C^\top$ (or the other way round, $\distwo c^\top\notin \dis \vec C^\top$).
Such a choice can be done so that $\dis M_A^{W(c)}$ is Pareto extremal hence corner of $\dis\vec C^\top$. 

As to (2), intuitively, if two polytopes are the same and Spoiler checks whether a corner $c_1$ of one is also a corner of the other, Duplicator has to answer with a corner $c_2$ that is extremal in the same direction as $c_1$. Formally, let $\dis\bisim\distwo$ and $W(s)$ be an extremal choice of Spoiler on $\dis$, $W(d)$ an optimal (winning) response of Duplicator on $\distwo$ supposed, for a contradiction, different from $W(s)$. Since $s$ is extreme in some direction $v$ for which $d$ is not, and since $W(s)$ achieves on $\dis$ the same as $W(d)$ on $\distwo$, there is a choice $W(d')$ where $d'$ is extremal in direction $v$ and thus achieves strictly better Pareto value on $\distwo$ than $d$, hence also strictly better (in direction $v$) than $W(s)$ on $\dis$. 

Now if Spoiler moved from $\distwo$ by $W(d)$ a matching response would be $W(s)$. On the other hand, if Spoiler moved from $\distwo$ by $W(d')$, this choice strictly dominates $W(d)$ on $\distwo$ (in direction $v$) and thus all choices on $\dis$ (in direction $v$) as $s$ is extremal in direction $v$. Hence there is no matching response for the Duplicator, a contradiction.

\medskip

As a result of (1) and (2), the bisimulation matrix requirement can be simplified. In the game fashion it is written as follows: for all $A\subseteq\act$
\begin{multline*}
(\dis - \distwo)\B=0 \implies\forall W_S\in\choices: \exists W_D\in\choices:\\\dis P_A^{W_S}\vec1=\distwo P_A^{W_D}\vec1 \wedge 
(\dis P_A^{W_S} - \distwo P_A^{W_D})\B=0 
\end{multline*}
Now we can transform it into: for all $A\subseteq\act$
\begin{multline*}
(\dis - \distwo)\B=0 \implies\forall W\in\rohy:\\ (\dis P_A^{W}-\distwo P_A^{W})\vec1=0 \wedge (\dis P_A^{W} - \distwo P_A^{W})\B=0
\end{multline*}
and since $\vec1$ is a column of $\B$, we can also write it equivalently  as: for all $A\subseteq\act$
$$(\dis - \distwo)\B=0 \implies 
\forall W\in\rohy: (\dis - \distwo) P_A^{W} \B=0$$
which is nothing but $P_A^{W(c)}$-stability for all $A\subseteq\labels$ and $c\in\rohy$. (We deal with $\rho$ not being a difference of any two distributions by scaling as in Proposition~\ref{prop-alg-determ}).
\QED
\end{proof}

\begin{corollary}
Any matrix $P_A^{W(c)}$-stable for all $A\subseteq\labels$ and $c\in\rohy$ and containing $\vec 1$ with minimal rank is a minimal bisimulation matrix.
\end{corollary}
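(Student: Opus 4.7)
The plan is to read this Corollary as an immediate specialization of Proposition~\ref{prop-alg-nondeterm}. That proposition establishes, for any matrix $\B$ containing $\vec 1$ as a column, the equivalence between being a bisimulation matrix and being $\PaWc$-stable for every $A\subseteq\labels$ and $c\in\rohy$. Hence the family of matrices over which the Corollary minimizes rank coincides exactly with the family of bisimulation matrices containing $\vec 1$, and my task reduces to showing that the rank-minimal element of this family is a minimal bisimulation matrix in the sense of Section~\ref{sec:algorithms-finite}.

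Next I would argue that minimizing the rank selects a matrix representing the coarsest bisimulation. If $\B$ witnesses a bisimulation $R$ via $\mu \, R \, \nu \iff (\mu-\nu)\B = 0$, then the pairs related by $R$ are parameterized by the left null space $\{\rho \mid \rho\B = 0\}$, whose dimension is $|\states| - \mathrm{rank}(\B)$. Smaller rank therefore yields a larger null space, hence a (non-strictly) coarser relation. Since $\bisim$ is itself a bisimulation and is linear by Lemma~\ref{lem:linearity}, Lemma~\ref{lem:existence} supplies a matrix realizing it, and this matrix must attain the smallest rank possible among bisimulation matrices containing $\vec 1$ — for if some $\B$ had strictly smaller rank, its corresponding relation would be strictly coarser than $\bisim$, contradicting maximality of $\bisim$.

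Combining the two observations settles the claim: any $\B$ meeting the hypotheses of the Corollary is simultaneously (i) a bisimulation matrix, by Proposition~\ref{prop-alg-nondeterm}, and (ii) of minimum rank within that class, by hypothesis. The remark from Section~\ref{sec:algorithms-finite} that all bisimulation matrices of least dimension share the same solution space then forces the solution space of $\B$ to coincide with the one corresponding to $\bisim$, so $\B$ is a minimal bisimulation matrix. The main (and only) delicate point is bridging the relational and linear-algebraic characterizations of $\bisim$; this is already done by Lemmas~\ref{lem:linearity} and \ref{lem:existence}, so no further computation is required.
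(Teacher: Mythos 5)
Your proposal is correct and follows the same route the paper intends: the corollary is immediate from Proposition~\ref{prop-alg-nondeterm} (identifying the stable matrices containing $\vec 1$ with the bisimulation matrices containing $\vec 1$) together with Lemmas~\ref{lem:linearity} and~\ref{lem:existence} and the observation that all bisimulation matrices of least rank share the same solution space. The only loose phrase is ``strictly smaller rank would give a strictly coarser relation'' --- the clean justification is that every bisimulation matrix's relation is contained in $\bisim$, hence its null space is contained in that of the matrix for $\bisim$, hence its rank is at least as large --- but this is a cosmetic reordering, not a gap.
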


\begin{reftheorem}{thm:algorithm-finite}
Algorithm~\ref{alg-fin} computes a minimal bisimulation matrix in exponential time.
\end{reftheorem}
\begin{proof}
The proof follows from the previous corollary and the fact that the algorithm only adds columns required by stability on the current partitioning.

Concerning the complexity, each step is polynomial except for computing and iterating over all exponentially many extremal choices and exponentially many sets of labels. 

The extremal points $\rohy$ can be computed easily: firstly, we identify which directions the corners of each $C_i$ are extremal for. The elements of $\rohy$ are combinations of corners etremal in the same direction. Therefore, we only need to compute the common partitioning of the directions according to extremality w.r.t. each corner.\QED
\end{proof}

%

\subsection{Continuous-time systems}
\label{app:infinite}

Let us repeat the main theorem of the subsection.

\begin{reftheorem}{thm:tableau}
Let $\SA = (\locations,\clocks,\actions,\edges,\clocksetting, \clockdist)$ be a deterministic SA over exponential distributions.
There is an algorithm to decide
in time polynomial in $|\SA|$ and exponential in $|\clocks|$ whether $\loc_1 \bisim \loc_2$ 
for any locations $\loc_1,\loc_2$.
\end{reftheorem}

\noindent
The proof follows easily from the following lemmata.

\begin{reflemma}{lem:abs}
For any distributions $\dis, \distwo$ on $\SAstates$ we have $\dis \bisim \distwo$ iff $\xi(\dis) \bisim \xi(\distwo)$.
\end{reflemma}

\begin{proof}
$\Rightarrow$: Let us take the maximal bisimulation in $\pts_\SA$. We map it by $\xi$; it is easy to see that it is still a bisimulation since the operations $\xi$ and $\patransa{A}$ commute for any $A\subseteq \labels$: for any distribution $\dis$, we have $\dis(\states_A) = \xi(\dis)(\states_A)$, and the unique distributions $\dis',\dis''$ such that $\dis \patransa{A} \dis'$ and $\xi(\dis) \patransa{A} \dis''$ satisfy $\dis'' = \xi(\dis')$.

$\Leftarrow$: Let us take $\dis$, $\distwo$ such that $\dis \not\bisim\distwo$. Then there is a finite sequence of set of labels $\Lab_1, \ldots, \Lab_n$, such that after applying this sequence, one of the conditions in Definition~\ref{def:infinite-bisim} is not satisfied. Again, as the operations $\xi$ and $\patransa{A}$ commute for any $A\subseteq \labels$, we get that also $\xi(\mu) \not\bisim \xi(\nu)$. \QED
\end{proof}

\begin{reflemma}{lem:expo-finite}
For a deterministic SA over exponential distributions, $|\chainstates| \leq |\locations|2^{|\clocks|}$.
\end{reflemma}

\begin{proof}
It is easy to check that for all states of the form $\loc \otimes \bigotimes_{\clo\in\Clo \subseteq \clocks} Exp(\lambda_\clo)$,
any successor in $\chain$ has the same form. Let us fix a state of such a form $q \otimes \bigotimes_{\clo \in \Clo} Exp(\lambda_c)$ and 
\begin{itemize}
\item an edge $(\loc,\act, \Clo', \loc')$ such that $\Clo \cap \Clo' = \{\clo'\}$ (i.e. exactly one clock from the trigger set is still positive). The successor state is of the form $\loc' \otimes \bigotimes_{\clo \in (\Clo\setminus \{\clo'\}) \cup \clocksetting(\loc')} Exp(\lambda_\clo)$. Indeed, the distribution $P[X - Y > t \mid X > Y]$ for $X \bisim Exp(\lambda)$ and $Y \bisim Exp(\mu)$ is still exponentially distributed with rate $\lambda$.
\item an edge of the general form $(\loc,\act, \Clo', \loc')$ such that $\Clo \cap \Clo' \neq \emptyset$ (i.e. some clocks from the trigger set are still positive) can be split into a diamond of edges among intermediate states when each clock from the set $\Clo \cap \Clo'$ runs down to zero, each of the intermediate states are of the specified form.\QED
\end{itemize} 
\end{proof}

\begin{reflemma}{prop:correctness}
There is a successful tableau from $\dis \bisim \distwo$ iff $\dis \bisim \distwo$ in $\abssemantics$.
Moreover, the set of nodes of a successful tableau is a subset of a bisimulation.  
\end{reflemma}
\begin{proof}
$\Leftarrow$: We can build an infinite successful tableau only using the rule \tableau{Step}. Note that the rule exactly follows the transition relation of $\abssemantics$ (only regards the distribution as a discrete convex combination of one of finitely many distributions -- states of $\chain$). Hence, by applying the rule \tableau{Step} from bisimilar distributions, we can obtain only tableau nodes corresponding to bisimilar distributions never reaching a failure node.

$\Rightarrow$: First, observe that if there is a successful tableau $T$ from node $\mu \bisim \nu$, there also is a successful (possibly infinite) tableau $T'$ using only the rule \tableau{Step}. This is easy to observe since whenever there is an application of the \tableau{Lin} rule, one can iteratively apply the \tableau{Step} rule infinitely many times (since one can express the current node as a linear combination of nodes from which one can apply the \tableau{Step} rule; and the same inductively holds for each such successor node). 

Note that by this construction, the set of nodes of $T$ is a subset of the set of nodes of $T'$. We show that for any node $\mu_1 \bisim \mu_2$ in $T'$ we have $\mu_1 \bisim \mu_2$ in $\abssemantics$.\
Let us fix such a node $\mu_1 \bisim \mu_2$ and let $R$ be a relation such that $\mu_1' R \mu_2'$ if $\mu_1' \bisim \mu_2'$ is an ancestor of the node $\mu_1 \bisim \mu_2$. Since the rule \tableau{Step} closely follows the definition of bisimulation, it is easy to see that $R$ is a bisimulation. As $R$ contains also $(\mu_1,\mu_2)$, we have $\mu_1 \bisim \mu_2$.
\QED
\end{proof}

\begin{reflemma}{lem:finite-tableau}
There is a successful tableau from $\dis \bisim \distwo$ iff there is a finite successful tableau from $\dis \bisim \distwo$ of size polynomial in $|\chainstates|$.
\end{reflemma}
\begin{proof}
The implication $\Leftarrow$ is trivial. As regards $\Rightarrow$, let us assume that there is a successful tableau from $\dis \bisim \distwo$. As each node in the tableau corresponds to a vector of dimension $|\chainstates|$, the maximal size of a set of linearly independent nodes is $|\chainstates|$. By applying the rule \tableau{Lin} when possible we can prune the tableau into linear size.
\QED
\end{proof}

Note that we not only have a polynomial bound on the size of a successful tableau, we also have a deterministic polynomial time  procedure to construct such a tableau. We build the tableau in arbitrary fixed order (such as breath-first) For each node, we first check whether the \tableau{Lin} rule can be applied; if not, we apply the \tableau{Step} rule. This concludes the proof of Theorem~\ref{thm:tableau}.

\end{document}